\newtheorem{prop}{Proposition}
\newtheorem{lemma}{Lemma}
\newcommand{\RNum}[1]{\uppercase\expandafter{\romannumeral #1\relax}}
\def\BibTeX{{\rm B\kern-.05em{\sc i\kern-.025em b}\kern-.08em
    T\kern-.1667em\lower.7ex\hbox{E}\kern-.125emX}}
\newcommand{\removelatexerror}{\let\@latex@error\@gobble}
\begin{document}
\title{Secrecy Throughput Maximization for Full-Duplex
Wireless Powered IoT Networks under Fairness Constraints}
%
%
%
%

\author{Roohollah~Rezaei, Sumei~Sun,~\IEEEmembership{Fellow,~IEEE,}
        Xin~Kang,~\IEEEmembership{Member,~IEEE,}
        Yong~Liang~Guan,~\IEEEmembership{Senior Member,~IEEE,}
        and~Mohammad~Reza~Pakravan,~\IEEEmembership{Member,~IEEE}
\thanks{ {\color{black}
This work was partially submitted to ICC 2019. 
R. Rezaei and S. Sun are with the Institute for Infocomm Research, Singapore
138632 (e-mail: sturr@i2r.a-star.edu.sg; sunsm@i2r.a-star.edu.sg).
R. Rezaei and MR. Pakravan are with the Electrical Engineering Department, Sharif University of Technology, Tehran, Iran (e-mail: rezaei\_roohollah@ee.sharif.edu; pakravan@sharif.edu). 
X. Kang is with the National Key Laboratory of Science and Technology
on Communications, Center for Intelligent Networking and Communications,
University of Electronic Science and Technology of China, Chengdu 610054,
China (e-mail: kangxin@uestc.edu.cn).
Y. Guan is with the School of Electrical and Electronic
Engineering, Nanyang Technological University, Singapore (e-mail: eylguan@ntu.edu.sg).}
}}

\maketitle


\begin{abstract}
In this paper, we study the secrecy throughput of a full-duplex wireless powered communication network (WPCN) for internet of things (IoT). The WPCN consists of a full-duplex multi-antenna base station (BS) and a number of sensor nodes. The BS transmits energy all the time, and each node harvests energy prior to its transmission time slot. The nodes sequentially transmit their confidential information to the BS, and the other nodes are considered as potential eavesdroppers. 
We first formulate the sum secrecy throughput optimization problem of all the nodes. The optimization variables are the duration of the time slots and the BS beamforming vectors in different time slots. The problem is shown to be non-convex. To tackle the problem, we propose a suboptimal two stage approach, referred to as sum secrecy throughput maximization (SSTM). In the first stage, the BS focuses its beamforming to blind the potential eavesdroppers (other nodes) during information transmission time slots. Then, the optimal beamforming vector in the initial non-information transmission time slot and the optimal time slots are derived. 
We then consider fairness among the nodes and propose max-min fair (MMF) and proportional fair (PLF) algorithms. The MMF algorithm maximizes the minimum secrecy throughput of the nodes, while the PLF tries to achieve a good trade-off between the sum secrecy throughput and fairness among the nodes. Through numerical simulations, we first demonstrate the superior performance of the SSTM to uniform time slotting and beamforming in different settings. Then, we show the effectiveness of the proposed fair algorithms.
\end{abstract}
\color{black}
\begin{IEEEkeywords}
wireless powered communication network (WPCN), physical layer security, Internet of Things (IoT), fairness, energy beamforming, full-duplex
\end{IEEEkeywords}

\section{Introduction}
With the advent of internet of things (IoT), many energy-constrained nodes will rely on the harvested energy to power their operation \cite{Zorzi}, especially in hazardous or difficult-to-reach locations \cite{Kang2015}. Wireless power transfer (WPT) is a promising energy harvesting solution to prolong the lifetime of these wireless networks. The two main branches of WPT are simultaneous wireless information and power transfer (SWIPT) \cite{Mishra2018} and wireless powered communication network (WPCN) \cite{Chi2017}, \cite{Zhong2018}. In SWIPT a base station (BS) transmits both energy and information towards a group of nodes, while in WPCN the nodes first harvest energy from the BS and then use the harvested energy to transmit information to the BS \cite{Zhang2013}. 

The IoT nodes either in SWIPT or WPCN transmit their information wirelessly. {\color{black}Due to the broadcast nature, the wireless medium is inherently insecure. Information transmitted over the air can be eavesdropped by the others.} Physical layer security is considered as a viable solution to ensure secure wireless information transmission in IoT networks \cite{Lee2018}. It needs no additional structure for secret key distribution and management compared to conventional cryptographic encryption algorithms. It depends on the difference of the rate between the main channel and the eavesdropper channel, referred to as secrecy rate. The artificial noise (AN) and the beamforming  are two physical layer security techniques to increase the secrecy rate \cite{Liu2018}. {\color{black}The AN is generated during the information transmission to degrade the quality of the eavesdropper channel. The beamforming increases the secrecy rate via increasing the received signal strength at the intended receiver.

Different problems in SWIPT secrecy have been studied in the literature. In WPCN scenarios, most of the previous works have considered a single information transmitter (IT). It receives energy from power station(s), called power beacon(s) (PB/PBs), and transmits confidential information to information receiver(s) (IR/IRs) \cite{Wu2016}--\cite{Bi2018}. The works in \cite{Wu2016}--\cite{Chen2017} have considered a WPCN with one PB, one IR, and one eavesdropper. They use the AN and the beamforming  to increase the secrecy rate. The authors in \cite{Guo2018} have considered a multiple ITs scenario. The ITs receive energy from a PB and transmit information to an IR in the presence of a fixed eavesdropper. In \cite{Moon2017} a helper energy harvesting node has been proposed which harvests energy in the energy transmission phase. Then it uses the harvested energy to produce the AN to disturb the eavesdroppers in the information transmission phase. In practice, it is common that in a WPCN, the PB and the IR are combined in a single node, referred to as base station (BS) \cite{Chi2017}--\cite{Zhang2013}, \cite{Lee2016}--\cite{Yang2015}. In such WPCNs, it is a serious threat  that  the legitimate nodes overhear the information of the others. Each node is an IT during its transmission time and a potential eavesdropper during its non-information transmission time. However, none of the above works has addressed this threat in these WPCNs.

To address the aforementioned problem, we consider a full-duplex WPCN which consists of one BS and a number of nodes. The BS has multiple antennas to transfer energy efficiently, while each node is equipped with a single antenna. The BS transfers energy to the nodes all the time and receives their information in sequence in the same frequency band. Hence, we refer to it as a full-duplex WPCN. Such full-duplex WPCNs have been considered in the literature (e.g., \cite{Kang2015}, \cite{Lee2016}, \cite{Lee2018a}), however, secrecy transmission has not been taken into consideration in those works. We consider that the nodes transmit their information to the BS sequentially, and each node harvests energy from the beginning up to its allocated time slot.  Then it uses the harvested energy to transmit its information to the BS in the presence of the other nodes which are considered as potential eavesdroppers. Hence, each node is an IT in its allocated time slot and an eavesdropper during the time slots of the other nodes. The full-duplex feature allows the use of the energy signal as the AN for the potential eavesdroppers. Hence, the energy signal of the BS has twofold benefits. It transfers energy to the nodes and at the same time degrades the quality of the eavesdropping channels. 
 

For the considered WPCN setup, we first look into maximizing the sum secrecy throughput of all the {\color{black}nodes}. The optimization variables are the duration of the time slots and the BS beamforming vectors. This maximization may result in unfairness among the nodes, especially when their channel conditions to the BS have large variations. To address this issue, we propose two fair algorithms to alleviate the unfairness by paying a price of reducing the sum secrecy throughput. The main contributions of this paper are summarized as follows:
\begin{itemize}
\item
We formulate the problem of maximizing the sum secrecy throughput of all the nodes in a full-duplex WPCN.  The problem is non-convex, so we propose a suboptimal two stage method, referred to as sum secrecy throughput maximization (SSTM), that converts the problem into convex optimization problems. In the first stage, we optimize the effect of the AN in the information transmission time slots via optimizing the BS beamforming vectors. Then, we substitute the solution into the original problem, which results into a convex optimization problem. The optimization variables are the BS beamforming vector in the initial non-information transmission time slot and the duration of the time slots. This problem can be solved using general convex optimization methods. However, we solve it by considering its dual problem and use the alternating optimization method to solve the dual problem. The proposed algorithm to solve the problem is much simpler and also gives intuition about the solution compared to general convex optimization methods.
\item
We consider the fairness issue among the nodes in case of maximizing the sum secrecy throughput. We propose max-min fair (MMF) and proportional fair (PLF) algorithms to decrease the difference between the secrecy throughput of different nodes.  We formulate the problems that consider these two fairness {\color{black}criteria,} and propose innovative slack variables to tackle the problems analytically. Then, we follow the same approach as the sum secrecy throughput maximization, and solve the dual problem using the alternating optimization method.      
\item
We compare the sum secrecy throughput and the fairness of the two proposed fair algorithms (MMF and PLF) with the no fairness one. We show that the MMF algorithm is much beneficial for nodes with poor channel conditions. However, it reduces the sum secrecy throughput drastically, when a node is far away from the BS. The PLF algorithm achieves a good trade-off between the sum secrecy throughput and the fairness among the nodes. The no fairness algorithm results in unfairness even between nodes with minor differences in their channel conditions.
\end{itemize}

The rest of the paper is organized as follows. In Section \RNum{2}, we describe the system model. Then, we formulate and solve the secrecy throughput maximization problem in Section \RNum{3}. We consider the unfairness among the nodes, and propose two fair algorithms in Section \RNum{4}. We evaluate our algorithm using numerical simulations in Section \RNum{5}, and demonstrate its superior to a number of benchmarking reference algorithms. Finally, we conclude the paper in Section \RNum{6}. 

\emph{Notations: We use the boldface capital and lower case to express the matrices and the vectors, respectively. $\bm{A}^H$ represents the transpose hermitian of matrix $\bm{A}$. $|.|$, $||.||$ shows the absolute value of a scalar  and the norm of a vector, respectively.}

\color{black}
\section{System Model}
We consider a full-duplex WPCN with one multi-antenna base station (BS) and $K$ single-antenna nodes. The BS has $N+1$ antennas, one of which is used for receiving information, and the rest are for transferring energy wirelessly to the nodes via beamforming. Hence, the BS also serves as the power station, called power beacon (PB), and the information receiver (IR).{\color{black}\footnote{The system model and the considered problems in this paper can also be applied to a separate PB and IR scenario.}} Each node transmits its confidential information in the presence of the other nodes which are considered as potential eavesdroppers. As a result, each node is an IT during its transmission time slot and a potential eavesdropper during the transmission time slots of the other nodes. The same frequency band is used for both the downlink energy transfer and the uplink information transmission. This will increase the bandwidth efficiency and the secrecy as the downlink BS signal acts as AN for the uplink information signals.  As shown in Fig. \ref{fig:Scenario}, we denote the downlink channel vector from the BS to node $i$ by $\bm{g_i}$, and the uplink scalar channel from node $i$ to the BS by $h_i$. The eavesdropping channel gain between node $i$ and $j$ is denoted by $h_{i,j}$. We consider a block fading channel model in which the channel state remains constant in a block of time, but may change in subsequent blocks. As we have no hidden eavesdropping nodes, we assume that the channel state information (CSI) of all links is perfectly known at the BS. 

\begin{figure}
\centering
\includegraphics[width = 1\linewidth]{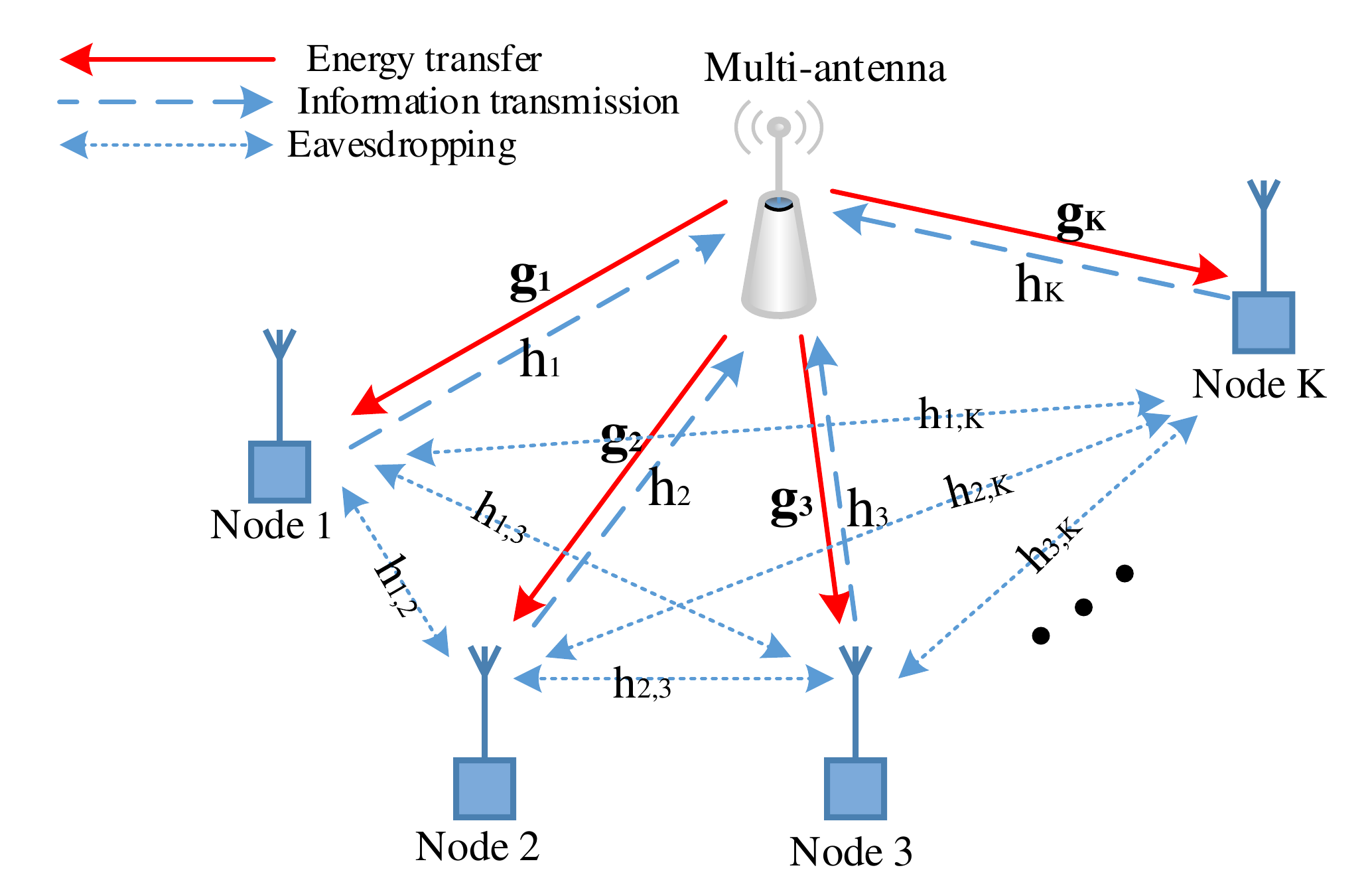}
\caption{Illustration of the considered full-duplex WPCN}
\label{fig:Scenario}
\end{figure}

\begin{figure}
\centering
\includegraphics[width = 1\linewidth]{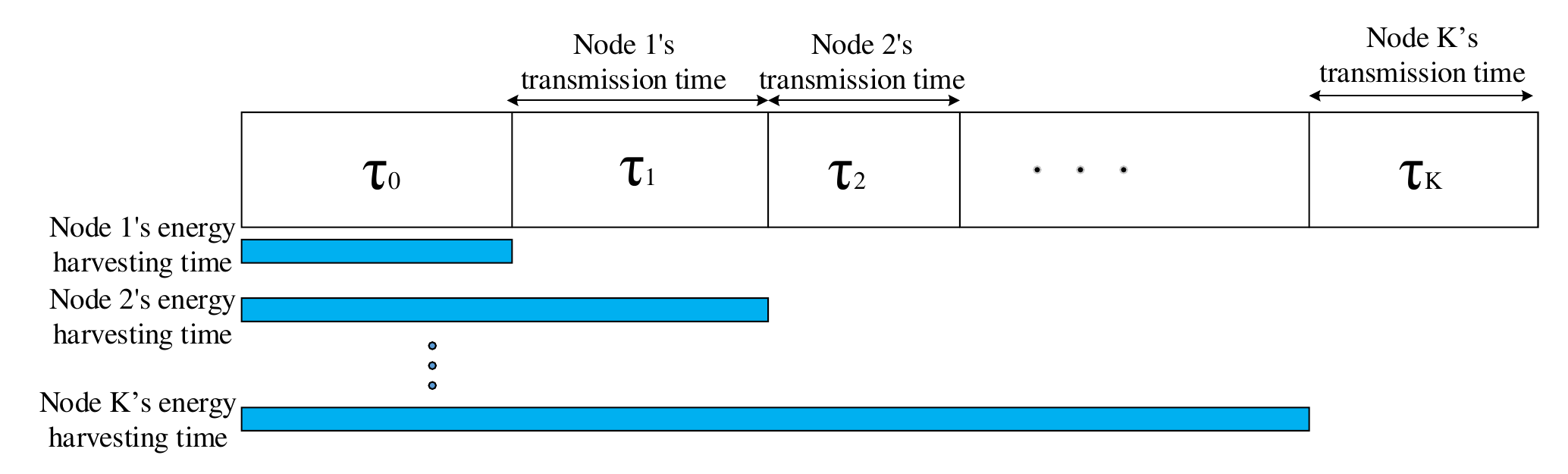}
\caption{The time frame structure of the considered full-duplex WPCN}
\label{fig:TimeFrame}
\end{figure}

{\color{black}The proposed time frame structure is depicted in Fig. \ref{fig:TimeFrame}. There are a total of $K+1$ timeslots. In time slot 0, the BS transfers energy to all nodes.  In time slot $k \in \{1,..,K\}$, the $k^{th}$ node transmits information to the BS while the BS continues transferring energy to the other nodes.} We sort the nodes according to their channel qualities such that node one has the best CSI. This will give the nodes with the worse channel quality more time to harvest more energy. The duration of the proposed time frame should be no longer than the coherence time of the channels to comply with our block fading channel model. In this paper, we focus on one single frame, so we assume a fixed CSI in our formulation. 
 As expressed in \cite{Yang2015}, when the BS has many antennas ($N \gg K$), the asymptotically optimal signal that maximizes the received energy of each node in each time slot must be in the following form 
\begin{align}
&\bm{x}_i = \sqrt{P_H}(\sum_{j=1}^K w_{i,j}\frac{\bm{g}_j}{||\bm{g}_j||}),\ \forall i \in \{0,1,...,K \}, \\
&\sum_{j=1}^K |w_{i,j}|^2 = 1, \ \forall i \in \{0,1,...,K \},
\end{align}  
\color{black} where $\bm{x}_{i}$ is the BS beamforming vector in time slot $i$, $P_H$ is the maximum power of the BS, and $w_{i,j}$ is the beamforming weight of node $j$ in time slot $i$ (i.e., $\tau_i$).  
Without loss of generality, we now focus on $\tau_i$. The received signal from the BS to node $j$ in $\tau_i$ is \color{black}
\begin{equation}
y_{j,i} = \bm{g}_j^H\bm{x}_i = \sqrt{P_H}w_{i,j}||\bm{g}_j||.
\end{equation} 

Node $i$ harvests energy from $\tau_0$ up to $\tau_{i-1}$. Hence, the received energy of node $i$ (i.e., $E_i$) can be expressed as
\begin{equation}\label{EiEquation}
E_i = \eta_i\sum_{j=0}^{i-1}\tau_j |y_{i,j}|^2 = \mu_i\eta_iP_H\sum_{j=0}^{i-1}\tau_j a_{j,i},
\end{equation}
where $\mu_i \triangleq ||\bm{g}_i||^2$, $a_{j,i}\triangleq |w_{j,i}|^2$, and $\eta_i$ is the energy conversion efficiency of node $i$. 

Node $i$ uses this energy to transfer its information during $\tau_{i}$. 
 Due to its omnidirectional information transmission, the other nodes can overhear its signal. To counteract the potential threat of information leakage, we use the BS to transmit the AN to the other nodes. The signal-to-interference-plus-noise-ratio (SINR) at node $j$, denoted by $\gamma_{i,j}$, is therefore written as
\begin{equation}\label{SINRnodes}
\gamma_{i,j} = \frac{|h_{i,j}|^2\frac{E_i}{\tau_i}}{\sigma^2+|y_{j,i}|^2}=\frac{|h_{i,j}|^2\frac{E_i}{\tau_i}}{\sigma^2+\mu_j a_{i,j}P_H}, 
\end{equation}   
$\forall j \in \{ 1,...,K \}\backslash \{ i \}$ and $\sigma^2$ is the additive white Gaussian noise (AWGN) variance at the receiver (we consider the same variance in all information receivers). The term $\mu_j a_{i,j}P_H$ is the AN from the BS. \color{black}The achievable eavesdropping rate from node $i$ to node $j$, denoted by $C_{i,j}$, is
\begin{equation}\label{EavesCapacity}
C_{i,j} = \log (1+\gamma_{i,j})=\log(1+\xi_{i,j}\frac{E_i}{\tau_i}),
\end{equation}
where $\xi_{i,j} \triangleq \frac{|h_{i,j}|^2}{\sigma^2+\mu_j a_{i,j}P_H}$.

As each node has a single antenna, it cannot harvest energy and transmit information simultaneously. So, the BS sets the beamforming weights of the transmitting node equal to zero (i.e., $a_{i,i}=0$). In addition, the BS can use self-interference cancellation (SIC) techniques to remove any low power side lobe (as mentioned its main lobes are towards other nodes) perfectly from the received information signal \cite{Kang2015}. Hence, the signal-to-noise-ratio (SNR) at the BS, denoted by $\gamma_i$, can be expressed as 
\begin{equation}
\gamma_i = \frac{|h_i|^2\frac{E_i}{\tau_i}}{\sigma^2}.
\end{equation}

The achievable rate from node $i$ to the BS can be written as
\begin{equation}\label{MainCapacity}
C_{i} = \log (1+\gamma_{i})=\log(1+\zeta_{i}\frac{E_i}{\tau_i}),
\end{equation}
where $\zeta_i \triangleq \frac{|h_i|^2}{\sigma^2}$.

Using equations \eqref{EavesCapacity} and \eqref{MainCapacity}, the worst case secrecy rate of node $i$ can be expressed as \cite{Wu2017}
\begin{equation}
R^{sec}_i = \log(1+\zeta_i\frac{E_i}{\tau_i})-\max_{j\ne i}\log(1+\xi_{i,j}\frac{E_i}{\tau_i}),\ \ \forall i \in \{1,...,K\}.
\end{equation}

The notations are provided in Table \ref{tab:Notations}.

\color{black}
\begin{table}[] 
\centering
\caption{Notations}\label{tab:Notations}
\begin{tabular}{|c|c|}
\hline
Symbols       & Definition                                                                                                    \\ \hline
$N$           & Number of the BS antennas used for energy transmission                                                           \\ \hline
$K$           & Number of the nodes                                                           \\ \hline
$x_i$         & The BS signal in time slot $i$                                                                            \\ \hline
$P_H$         & Maximum transmission power of the BS                                                                                           \\ \hline
$\sigma^2$    & Variance of the noise (same for the BS and the nodes)                                                                                         \\ \hline
$\eta_i$           & Energy conversion efficiency of node $i$                                                                                              \\ \hline
$h_i$         & Channel gain from node $i$ to the receiving antenna of the BS                                                          \\ \hline
$h_{i,j}$     & Channel gain between node $i$ and node $j$                                                      \\ \hline
$\bm{g}_i$    & The energy transmission channel between the BS and node $i$                                                         \\ \hline
$\mu_i$    & $||\bm{g}_i||^2$                                                                                              \\ \hline
$\tau_0$      & The time slot that all nodes harvest energy                                                                \\ \hline
$\tau_i$      & The time slot that node $i$ transmits information                                \\ \hline
$w_{i,j}$     & The energy beamforming weight of node $j$ in time slot $i$                                                                 \\ \hline
$a_{i,j}$ & ${|w_{i,j}|^2}$                                                                             \\ \hline
$E_{0i}$      & $\tau_0 a_{0,i}$                                                                                          \\ \hline
$E_i$         & The harvested energy of node $i$ (equation \eqref{EiEquation})                                                                                \\ \hline
$\zeta_i$         & $\frac{|h_i|^2}{\sigma^2}$                                                                                        \\ \hline
$\xi_{i,j}$         & $\frac{|h_{i,j}|^2}{\sigma^2+\mu_j a_{i,j}P_H}$                                                                                                   \\ \hline
$\xi_i$         & $\max_{j \ne i}\xi_{i,j}$                                                                                                   \\ \hline
$R_i^{sec}$   & The secrecy rate of node $i$                                                                                  \\ \hline
$D_i^{sec}(\frac{E_i}{\tau_i})$   & The secrecy throughput of node $i$ (equation \eqref{eq:DefineThrSec})                                                                                  \\ \hline
$\mathcal{B}_i\big(\frac{E_i}{\tau_i} \big)$         & $\mu_i\eta_iP_H \bigg(\frac{ \zeta_i-\xi_i}{(1+\zeta_i\frac{E_i}{\tau_i})(1+\xi_i\frac{E_i}{\tau_i})} \bigg)$                                                                                            \\ \hline
$\nu$     & Dual variable                                                                                                \\ \hline
\end{tabular}
\end{table}

\section{Sum Secrecy Throughput Maximization}\label{seq:SectionMaxSumSecrecy}
We maximize the sum secrecy throughput of all the nodes in one single time frame. 
\color{black}
We formulate {the problem of the sum} secrecy throughput as 
\begin{maxi!}
{\bm{\tau},\bm{A},\bm{a}_0}{\sum_{i=1}^K \tau_i\bigg( \log{\big(1+\frac{ \zeta_i E_i}{\tau_i}\big)} -   \max_{j\ne i}\log{\big(1+\frac{\xi_{i,j} E_i }{\tau_i}\big)}\bigg),}{\label{PMainSecrecy}}{}
  \addConstraint{ \sum_{j=1}^K a_{i,j}= 1, \  \forall i \in \{0,1,...,K\}} \label{equ:MAXRECVPTRANAVG}
\addConstraint{ \sum_{i=0}^K \tau_i = 1  }
\addConstraint{a_{i,j} \ge 0,\ \forall i \in \{0,1,...,K \},\ \forall j \in \{ 1,...,K \}} 
\addConstraint{ \tau_i \ge 0,\ \forall i \in \{0,1,...,K \},}
\end{maxi!}
where $\bm{\tau}\triangleq (\tau_0,\tau_1,...,\tau_K)$, $\bm{a}_0 \triangleq (a_{0,1},...,a_{0,K})$, $\bm{A} \triangleq [a_{i,j}]_{i=1:K}^{j=1:K}$, and the total time frame duration is normalized to one. The vector $\bm{a}_0$ expresses the beamforming weights during $\tau_0$, and the matrix $\bm{A}$ represents the beamforming vectors in the information transmission time slots ($\tau_1$ up to $\tau_K$). Equation \eqref{equ:MAXRECVPTRANAVG} expresses $K+1$ constraints with respect to each time slot. {The problem} defined by equation \eqref{PMainSecrecy}  is non-convex. Its objective function has maximization, and some optimization parameters are in the denominator of a logarithmic function. We propose a suboptimal approach, named as sum secrecy throughput maximization (SSTM), that converts this problem into convex optimization problems by the following two stages:
\begin{enumerate}
\item
\textbf{Blinding the non-transmitting nodes}, we optimize the beamforming vectors in $\tau_1$ up to $\tau_K$ (matrix $\bm{A}$)  to blind the non-transmitting nodes. As previously mentioned, these nodes are considered as potential eavesdroppers. 
\item
\textbf{Optimizing the initial beamforming vector and the time slots}, we put the obtained optimized beamforming vectors in the previous stage into the original problem. {\color{black}Then, the optimal of the beamforming vector in $\tau_0$ (vector $\bm{a}_0$) and the optimal time slots ($\bm{\tau}$) to maximize the sum secrecy throughput are derived.}
\end{enumerate}
\subsection{Blinding the non-transmitting nodes}
Each node transmits its information during its allocated time slot in the presence of the other nodes. In this stage, we optimize the beamforming vector in the information transmission time slot of each node to maximize the intereference of the non-transmitting nodes. To this end, we have to minimize $\xi_{i}\overset{\Delta}{=} \max_{j\ne i}\xi_{i,j}$  for each time slot $\tau_{i}$ as
\begin{mini!}|l|
 {{\bm{a}_i}}{\xi_i = \max_{j\ne i}\xi_{i,j},}{\label{minEaves}}{}
\addConstraint{\sum_{j=1}^K a_{i,j}=1}
\addConstraint{a_{i,j}\ge 0,\ \forall j \in \{ 1,...,K \}},
 \end{mini!}
$\forall i \in\{1,...,K\}$. $\bm{a}_i \triangleq \bm{A}(i,:) = (a_{i,1},...,a_{i,K})$ is the $i^{th}$ row of the Matrix $\bm{A}$, which is the beamforming vector during $\tau_i$. The parallel subproblems defined by equation \eqref{minEaves} can be solved independently for each $i\in \{ 1,...,K \}$. As it can be seen from equation \eqref{SINRnodes}, the subproblem $i$ tries to find the beamforming weights to minimize the maximum ratio of the eavesdropping channel gain and the interference plus noise of all the non-transmitting nodes (all the nodes except node $i$). This problem is equivalent to the maximization of the minimum interference. Hence, the primary goal of the beamforming in the information transmission period is to optimize the AN.
 
\begin{prop}\label{prop:Blinding}
Algorithm \ref{alg:BWeights} yields the optimal solution of minimizing $\xi_{i,j}$ (problem defined by equation (\ref{minEaves})). This algorithm terminates at most in $K-2$ steps.
\end{prop}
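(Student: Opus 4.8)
The plan is to treat the subproblem \eqref{minEaves}, for a fixed transmitting node $i$, as a convex min–max resource allocation over the simplex. I would first note that each $\xi_{i,j}=|h_{i,j}|^2/(\sigma^2+\mu_j a_{i,j}P_H)$ is convex and strictly decreasing in its own weight $a_{i,j}$, so the pointwise maximum $\xi_i=\max_{j\neq i}\xi_{i,j}$ is convex on the feasible set $\{\bm{a}_i:\sum_j a_{i,j}=1,\ a_{i,j}\geq 0\}$; hence the problem is convex and has a unique optimal value. Since the objective depends on node $i$ only through the budget constraint, I would observe at the outset that $a_{i,i}=0$ at any optimum (weight placed on $j=i$ never lowers the maximum), leaving $K-1$ genuinely free weights.

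Next I would pin down the structure of the optimizer by an exchange argument. If two nodes $j,k\neq i$ both carry positive weight but $\xi_{i,j}>\xi_{i,k}$, then shifting an infinitesimal amount of weight from $k$ to $j$ strictly lowers $\xi_{i,j}$ while, by continuity, keeping $\xi_{i,k}$ below it, which strictly reduces the maximum. Consequently every node with positive weight must attain a common value $t^\star:=\xi_i$, while every node with zero weight must have baseline $|h_{i,j}|^2/\sigma^2\leq t^\star$. Solving $\xi_{i,j}=t^\star$ on the active set $S$ gives $a_{i,j}=(|h_{i,j}|^2-t^\star\sigma^2)/(t^\star\mu_j P_H)$, and imposing $\sum_{j\in S}a_{i,j}=1$ yields the closed form
\[
t^\star=\frac{\sum_{j\in S}\frac{|h_{i,j}|^2}{\mu_j P_H}}{1+\sigma^2\sum_{j\in S}\frac{1}{\mu_j P_H}}.
\]
The entire difficulty then collapses to identifying the correct active set $S$, which is what Algorithm \ref{alg:BWeights} computes.

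I expect the crux of the argument to be justifying the greedy rule that sorts the nodes by their baselines $|h_{i,j}|^2/\sigma^2$ and discards the smallest one whenever it becomes infeasible. The key lemma I would prove is a monotonicity identity: writing $t(S)$ for the value obtained from the formula above on a set $S$, removing a node $m$ gives $t(S\setminus\{m\})>t(S)$ \emph{if and only if} $|h_{i,m}|^2/\sigma^2<t(S)$, which is exactly the condition that the equalizing weight $a_{i,m}$ would be negative and hence infeasible. This makes the greedy step both necessary (an infeasible node must be expelled) and self-consistent (expelling it raises the target, so the discarded baseline stays below the final $t^\star$, confirming the node is genuinely inactive), while every retained node keeps a nonnegative weight. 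The main obstacle is making this equivalence airtight and, in particular, ruling out that some other removal order could yield a smaller maximum; the monotonicity identity together with uniqueness of the optimal value (from convexity) is what closes that gap.

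Finally, termination in at most $K-2$ steps would follow by counting: node $i$ has $K-1$ potential eavesdroppers, each greedy step expels exactly one, and at least one node must remain active because the unit power budget has to be allocated somewhere (placing all weight on the node with the largest baseline is always feasible and gives a finite maximum). Hence at most $K-1-1=K-2$ removals can occur before the algorithm halts.
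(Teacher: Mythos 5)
Your proof is correct and arrives at exactly the solution structure the paper derives, but by a somewhat different and more self-contained route. The paper introduces the epigraph variable $\Phi$, invokes Slater's condition, and reads the equalize-or-zero dichotomy (either $\xi_{i,j}=\Phi$ or $a_{i,j}=0$) plus the closed-form weights directly off the KKT conditions, then asserts that iteratively zeroing out negative weights reaches the optimum; you instead obtain the same dichotomy via an elementary exchange argument (no Lagrangian needed), derive the equivalent closed form for the equalized level $t^\star$, and---this is the genuinely valuable addition---prove the monotonicity identity $t(S\setminus\{m\})>t(S)$ if and only if $|h_{i,m}|^2/\sigma^2<t(S)$, i.e., if and only if node $m$'s equalizing weight is negative. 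That identity makes rigorous two points the paper's proof leaves implicit: expelled nodes have baselines below a target that only increases, so they satisfy the inactivity condition at the final point, and the outcome is independent of the removal order, so no alternative pruning sequence can do better (combined with sufficiency of the stationarity conditions for this convex problem). Two small corrections: Algorithm \ref{alg:BWeights} does not sort nodes by baseline---it expels the first node encountered with a negative weight---but your order-independence observation covers this, since the equalizing weights are decreasing in $t$ and any weight negative at $t(S)$ stays negative after a removal raises the target; and in the exchange step, when several nodes attain the maximum you must shift weight from $k$ to \emph{all} maximizers simultaneously, since donating to a single maximizer does not strictly lower the max. Your termination count ($K-1$ candidate eavesdroppers, at least one of which must remain active because the unit budget must be allocated, hence at most $K-2$ removals) matches the paper's argument.
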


\begin{figure}
\removelatexerror
\begin{algorithm}[H]
\caption{Optimal beamforming weights in time slot $i$}\label{alg:BWeights}
\begin{algorithmic}[1]
\State \textbf{Initialization:} $flag = 1$, $B = \{1,...,K\} \backslash \{ i \} , \ a_{i,i}=0$
\While{($flag$)}
\State $flag = 0$
\For{$ j \in B$}
\State $a_{i,j}=\frac{1 + \frac{\sigma^2}{P_H} \big( \sum_{l \in B} \frac{1}{\mu_l}-\frac{1}{|h_{i,j}|^2}\sum_{l \in B} \frac{|h_{i,l}|^2}{\mu_l} \big)}{\frac{\mu_j}{|h_{i,j}|^2}\sum_{l \in B} \frac{|h_{i,l}|^2}{\mu_l}}$  \label{step:MainOneAlgOne}
\If{$a_{i,j}<0$}
\State $a_{i,j} = 0$, $flag = 1$, $B = B \backslash \{ j \}$
\State break;
\EndIf
\EndFor
\EndWhile
\end{algorithmic}
\end{algorithm}
\end{figure}
\begin{proof}
The proof is given in appendix \ref{PropositionBeta}
\end{proof}

Algorithm \ref{alg:BWeights} (step \ref{step:MainOneAlgOne}) reveals some interesting facts about the beamforming weights. If the channel between node $i$ (as the information transmitter) and node $j$ (as a potential eavesdropper) ($|h_{i,j}|^2$) is strong, the beamforming weight for node $j$ needs to be increased. On the other hand, if this channel is much degraded, its weight becomes zero. This means that the BS does not need to jam this node. Another interesting point is that if an eavesdropper is near to the BS, its weight decreases. It is reasonable as the BS needs less power to generate the same AN for this node than the nodes that are farther away from the BS.

\subsection{Optimizing the initial beamforming vector and the time slots}
We use the obtained optimal beamforming vectors (matrix $\bm{A}$) in the previous stage to simplify the original problem (equation \eqref{PMainSecrecy}) as

\begin{maxi!}
{\bm{\tau},\bm{a}_0}{\sum_{i=1}^K \tau_i\bigg( \log{(1+\frac{ \zeta_i E_i}{\tau_i})} - \log{(1+\frac{\xi_{i} E_i }{\tau_i})}\bigg),}{\label{PMain}}{}
  \addConstraint{ \sum_{j=1}^K a_{0,j}= 1 } \label{equ:Subslot0}
\addConstraint{ \sum_{i=0}^K \tau_i = 1}\label{equ:TauIs}
\addConstraint{ a_{0,j} \ge 0,\ \forall j \in \{ 1,...,K \}}
\addConstraint{ \tau_i \ge 0,\ \forall i \in \{0,1,...,K \}.}
 \end{maxi!}
 Due to the $\tau_0a_{0,i}$ term in $E_i = \mu_i\eta_iP_H(\tau_0a_{0,i} + \sum_{j=1}^{i-1}\tau_ja_{j,i}),\ \forall i \in \{1,...,K \}$, we have the product of two variables inside the $\log$ part of the objective function. As a result, it can be shown that the objective function is non-convex in general (consider the non-convexity of $\log(1+\tau_0a_{0,1})$ with respect to $\tau_0$ and $a_{0,1}$). To resolve this issue, we define $E_{0i} \triangleq \tau_0a_{0,i}$ and rewrite $E_i = \mu_i\eta_iP_H(E_{0i} + \sum_{j=1}^{i-1}\tau_ja_{j,i}),\ \forall i \in \{1,...,K \}$. We reformulate the problem as
\begin{maxi!}
{\bm{E}_{0},\bm{\tau}}{\sum_{i=1}^K \tau_i\bigg( \log{\big(1+\frac{ \zeta_i E_i}{\tau_i}\big)} -  \log{\big(1+\frac{\xi_{i} E_i }{\tau_i}\big)}\bigg),}{\label{PSeveralnode}}{}
\addConstraint{\sum_{i=1}^K (E_{0,i}+\tau_i) = 1} \label{equ:E0i}
\addConstraint{E_{0i},\tau_i \ge 0, \ \forall i \in \{ 1,...,K\}, }
\end{maxi!}
where $\bm{E}_{0} \triangleq ( E_{01},E_{02},...,E_{0K} )$, and we combined equations \eqref{equ:Subslot0} and \eqref{equ:TauIs}. We omitted $\tau_0$ from the formulation, but after the optimization, we can obtain it from equation \eqref{equ:TauIs} (As $E_{0i}$'s $\ge0$, $\forall i \in \{ 1,...,K\}$, we have $\tau_0 \ge 0$). We prove in the following lemma that the throughput of each node, defined as 
\begin{equation}
D_i^{sec}(\frac{E_i}{\tau_i}) \triangleq \tau_i\bigg( \log{\big(1+\frac{ \zeta_i E_i}{\tau_i}\big)} - \log{\big(1+\frac{\xi_{i} E_i }{\tau_i}\big)}\bigg), \label{eq:DefineThrSec}
\end{equation}
 is concave. Therefore, the objective function is concave as it is a non-negative sum of concave functions \cite{Boyd2004}. In addition, the equality constraint (equation \eqref{equ:E0i}) is linear, so the above problem is convex.
\begin{lemma}\label{lem:ConcaveDi}
The throughput of each node $(D_i^{sec}(\frac{E_i}{\tau_i}))$ is a concave function of $\bm{E_0}$ and $\bm{\tau}$. 
\end{lemma}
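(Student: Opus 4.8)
The plan is to recognize $D_i^{sec}$ as the \emph{perspective} of a single-variable function, which reduces the whole claim to checking concavity of one scalar function. The crucial first move is to \emph{not} treat $D_i^{sec}$ as a difference of two terms. Indeed, writing $D_i^{sec} = \tau_i\log(1+\zeta_i E_i/\tau_i) - \tau_i\log(1+\xi_i E_i/\tau_i)$, each summand is individually (by the perspective rule) jointly concave in $(E_i,\tau_i)$, but they enter with opposite signs, so a term-by-term argument proves nothing. Instead I would merge the two logarithms and set $r \triangleq E_i/\tau_i$, giving
\begin{equation}
D_i^{sec} = \tau_i\, f(r), \qquad f(r) \triangleq \log\!\left(\frac{1+\zeta_i r}{1+\xi_i r}\right). \nonumber
\end{equation}

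Next I would show that the scalar function $f$ is concave on $r \ge 0$. Differentiating twice yields
\begin{equation}
f''(r) = \frac{\xi_i^2}{(1+\xi_i r)^2} - \frac{\zeta_i^2}{(1+\zeta_i r)^2}, \nonumber
\end{equation}
and $f''(r)\le 0$ is equivalent, after taking square roots of the two nonnegative quantities and cross-multiplying, to the single inequality $\zeta_i \ge \xi_i$. Since node $i$ only transmits meaningfully when its main channel dominates the (minimized) eavesdropping term, $\zeta_i \ge \xi_i$ holds, so $f$ is concave.

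With $f$ concave, the perspective $g(E_i,\tau_i) \triangleq \tau_i f(E_i/\tau_i)$ is jointly concave in $(E_i,\tau_i)$ on $\tau_i>0$ \cite{Boyd2004}. Finally, from the reformulation $E_i = \mu_i\eta_i P_H\big(E_{0i}+\sum_{j=1}^{i-1}\tau_j a_{j,i}\big)$ is an affine (indeed linear) function of $(\bm{E}_0,\bm{\tau})$, the weights $a_{j,i}$ being fixed by the first stage. Hence the map $(\bm{E}_0,\bm{\tau})\mapsto(E_i,\tau_i)$ is affine, and composing the jointly concave $g$ with an affine map preserves concavity, giving concavity of $D_i^{sec}$ in $(\bm{E}_0,\bm{\tau})$.

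The main obstacle — and really the only idea in the proof — is this first step: resisting the split of $D_i^{sec}$ into two perspective terms (which leaves a concave-minus-concave form that is in general neither convex nor concave) and instead combining the logarithms so that the one-dimensional concavity test collapses cleanly to $\zeta_i \ge \xi_i$. A secondary technical point is the behaviour at the boundary $\tau_i = 0$, where the perspective is undefined; there one notes $D_i^{sec}\to 0$ as $\tau_i\to 0^+$ and extends $g$ by continuity, which preserves concavity on the closed feasible set.
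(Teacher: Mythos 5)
Your proof is correct and takes essentially the same route as the paper's: both recognize $D_i^{sec}$ as the perspective of the scalar secrecy-rate function and reduce concavity to the sign of the same second-derivative expression, nonpositive exactly when $\zeta_i \ge \xi_i$ (positive secrecy rate), which the paper phrases as the rank-one Hessian $\alpha_i\bm{\beta}_i\bm{\beta}_i^T$ of $D'_i(E_i)$ being negative semidefinite. The only cosmetic differences are ordering — you take the perspective of the scalar $f$ first and then compose with the affine map $(\bm{E}_0,\bm{\tau})\mapsto(E_i,\tau_i)$, while the paper composes $D'_i$ with the linear map first and checks the multivariate Hessian — plus your explicit continuity extension at $\tau_i=0$, a boundary point the paper leaves implicit.
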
 
\begin{proof}
The proof is given in appendix \ref{app:LemmaConcave}
\end{proof}

As the above problem is convex, we could solve it using interior point methods, but we propose another method that needs less complexity and gives us intuition about the solution. It is easy to verify that the Slater condition holds for the problem, and due to the convexity, the strong duality holds. Hence, we can obtain the optimal solution via solving the dual problem, which can be written as 
\begin{maxi!}
{\nu}{\min_{\bm{\tau},\bm{E}_{0}}\ \mathcal{L}(\bm{\tau},\bm{E}_{0},\nu),}{\label{DualProb}}{}
\addConstraint{\tau_i,E_{0i}\ge 0,\ \forall i \in \{ 1,...,K \},} 
\end{maxi!}
where $\mathcal{L}(\bm{\tau},\bm{E}_{0},\nu)=\sum_{i=1}^K -D_i^{sec}(\frac{E_i}{\tau_i})+ \nu(\sum_{i=1}^K (E_{0i} + \tau_i)  -1).$ In order to solve the above problem, we use the alternating optimization method. We assume a fixed $\bm{\tau}$ and obtain the optimal $\bm{E}_0$ and $\nu$ via Algorithm \ref{alg:FixTauE0}. This algorithm uses just a simple bisection method to obtain the optimal values. We express in Proposition \ref{theo:E0Opt} that this algorithm yields the optimal values. Then, using the gradient descent method, we update $\bm{\tau}$ and obtain the new corresponding optimal $\bm{E}_0$ and $\nu$. We repeat this iterative method, which is summarized in Algorithm \ref{alg:TauE0}, to find the optimal solution. Due to the convexity of the above problem, we can obtain the optimal solution ($\bm{\tau}^*,\bm{E}_{0}^*$) via this iterative method. Using this approach, we obtain half of the optimization variables ($\bm{E}_0^*$) via a simple bisection method, and obtain the other half using a fast convergent method. It should be noted that for some channel realizations, the optimal $E_{0i}$'s lie in the boundaries of the convex region, and finding the optimal solution in these cases is much easier with the proposed algorithm. We express the following proposition which gives us the optimal $\bm{E}_0$ and $\nu$ for a fixed $\bm{\tau}$.  
\begin{prop} \label{theo:E0Opt}
Algorithm \ref{alg:FixTauE0} yields the optimal solution of the problem defined by equation \eqref{DualProb} for a fixed $\tau$.
\end{prop}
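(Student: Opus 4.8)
The plan is to exploit the convexity established in Lemma~\ref{lem:ConcaveDi}: since the reformulated problem \eqref{PSeveralnode} is convex and Slater's condition holds, strong duality implies that a point is optimal for \eqref{DualProb} if and only if it satisfies the KKT conditions. With $\bm{\tau}$ held fixed, the only remaining primal variables are $\bm{E}_0$, so it suffices to show that Algorithm~\ref{alg:FixTauE0} returns a pair $(\bm{E}_0,\nu)$ meeting these conditions. The crucial structural observation is that, for fixed $\bm{\tau}$ and the $\bm{A}$ fixed from the first stage, $E_i=\mu_i\eta_iP_H(E_{0i}+\sum_{j=1}^{i-1}\tau_j a_{j,i})$ is affine in $E_{0i}$ alone; hence each $D_i^{sec}$ depends on a single component of $\bm{E}_0$, and the inner minimization $\min_{\bm{E}_0\ge 0}\mathcal{L}(\bm{\tau},\bm{E}_0,\nu)$ decouples into $K$ independent one-dimensional convex subproblems $\min_{E_{0i}\ge 0}\big(-D_i^{sec}(\tfrac{E_i}{\tau_i})+\nu E_{0i}\big)$.

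For each such subproblem I would write the stationarity condition. A direct differentiation gives $\partial D_i^{sec}/\partial E_{0i}=\mathcal{B}_i(\tfrac{E_i}{\tau_i})$, exactly the quantity tabulated in Table~\ref{tab:Notations}. Because the subproblem is convex, its KKT conditions read $\mathcal{B}_i(\tfrac{E_i}{\tau_i})=\nu$ when $E_{0i}>0$, and $\mathcal{B}_i(\tfrac{E_i}{\tau_i})\big|_{E_{0i}=0}\le\nu$ when $E_{0i}=0$. Thus, for a given $\nu$, the minimizing $E_{0i}(\nu)$ is either $0$ or the unique root of $\mathcal{B}_i=\nu$, and I would verify that this is precisely what the inner bisection in Algorithm~\ref{alg:FixTauE0} computes.

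The monotonicity needed to justify both bisections follows from Lemma~\ref{lem:ConcaveDi}: concavity of $D_i^{sec}$ in $E_{0i}$ makes $\mathcal{B}_i=\partial D_i^{sec}/\partial E_{0i}$ non-increasing in $E_{0i}$, so the equation $\mathcal{B}_i=\nu$ has at most one solution and the inner bisection is well posed. The same monotonicity implies $E_{0i}(\nu)$ is non-increasing in $\nu$, whence the aggregate $g(\nu)\triangleq\sum_{i=1}^K E_{0i}(\nu)$ is non-increasing. The outer dual variable is then fixed by primal feasibility: by complementary slackness the optimal $\nu^{*}$ must force the budget $\sum_{i=1}^K(E_{0i}+\tau_i)=1$, i.e. $g(\nu^{*})=1-\sum_{i=1}^K\tau_i$. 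Monotonicity of $g$ guarantees a unique such $\nu^{*}$, which the outer bisection locates. Having matched every KKT condition, sufficiency under convexity yields global optimality.

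The main obstacle I anticipate is the careful bookkeeping around the boundary and monotonicity facts rather than any deep idea: I must confirm that $\mathcal{B}_i$ is strictly decreasing on the relevant range (so that roots are unique and the bisection brackets them), that $\mathcal{B}_i>0$ wherever the secrecy rate is positive (so a feasible root can exist), and that the $E_{0i}=0$ branch is selected exactly when $\mathcal{B}_i\big|_{E_{0i}=0}\le\nu$. I would also pin down the admissible interval of $\nu$ over which $g(\nu)$ sweeps from above to below the target $1-\sum_i\tau_i$, ensuring the outer bisection starts from a valid bracket. These checks, together with the decoupling and the KKT characterization, complete the proof.
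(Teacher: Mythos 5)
Your proposal is correct and takes essentially the same route as the paper's proof: write the KKT conditions of \eqref{DualProb} for fixed $\bm{\tau}$, obtain the per-node stationarity $\mathcal{B}_i(\frac{E_i}{\tau_i})=\nu$ (noting $\partial D_i^{sec}/\partial E_{0i}=\mathcal{B}_i$) with the clamp $E_{0i}=0$ when $\mathcal{B}_i\big|_{E_{0i}=0}\le\nu$, and use monotonicity of the budget residual $Er(\nu)$ to justify the bisection on $\nu$, with convexity supplying sufficiency of the KKT point. The only slip is cosmetic: Algorithm \ref{alg:FixTauE0} has no inner bisection, since the unique root of $\mathcal{B}_i=\nu$ is available in closed form via the quadratic formula in step \ref{step:optimalEnergy}, which is exactly the minimizer your KKT analysis prescribes.
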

\begin{proof}
The proof is given in appendix \ref{Proposition Lambda}.
\end{proof}

\begin{figure}
\removelatexerror
\begin{algorithm}[H]
\caption{optimal $\bm{E}_{0}$ and $\nu$}\label{alg:FixTauE0}
\begin{algorithmic}[1]
\State \textbf{Initialization:} $B=\{1,...,K\}$, $\nu_{min} = 0, \nu_{max} = \max_{i} \mu_i\eta_iP_H(\zeta_i-\xi_i) $
\While{$Er(\nu) > \epsilon$}
\State $\nu = \frac{\nu_{min}+\nu_{max}}{2}$
\For{$i \in B$}
\State $E_i = \tau_i \frac{-(\zeta_i +\xi_i)+\sqrt{(\zeta_i -\xi_i)^2+
\frac{4\zeta_i \xi_i \mu_i\eta_i P_H(\zeta_i -\xi_i)}{\nu}}}{2\zeta_i \xi_i}$ \label{step:optimalEnergy}
\State $E_{0i} = \max\{0,\frac{E_i}{\mu_i\eta_i P_H}-\sum_{j=1}^{i-1}\tau_j a_{j,i}\}$\label{step:NonNeededE0}
\EndFor
\State $Er(\nu) \triangleq \sum_{i=1}^K (E_{0i}+\tau_i) - 1$ \label{Alg:defErorNoFair}
\If{$Er(\nu)>0$}
\State $\nu_{min} = \nu $
\Else 
\State $\nu_{max} = \nu$
\EndIf
\EndWhile
\State $E_i = \mu_i\eta_i P_H(E_{0i}+\sum_{j=1}^{i-1}\tau_j a_{j,i}), \ \forall i \in \{1,...,K\}$
\end{algorithmic}
\end{algorithm}
\end{figure}
The intuition behind Algorithm \ref{alg:FixTauE0} is that for a fixed $\bm{\tau}$, it derives the optimal energy (${E_i}$ in step \ref{step:optimalEnergy}) that each node must have in order to maximize the sum secrecy throughput. As the harvested energy of node $i$ during ($\tau_1$,...,$\tau_{i-1}$) is from the AN and fixed for this algorithm, it can only reach the optimal $E_i$ via ${E_{0i}}$ (step \ref{step:NonNeededE0}). The available energy in $\tau_0$ is limited, so the algorithm can reach the optimal $E_i$'s and the corresponding maximum secrecy rate to some extent ($Er(v)$ defined in step \ref{Alg:defErorNoFair}  is actually $\sum_{i=1}^K E_{0i} -\tau_0 $, which expresses the limitation). We can realize from the algorithm (step \ref{step:NonNeededE0}) that the nodes which receive enough energy using the AN, do not receive energy during $\tau_0$. {\color{black}Instead, the algorithm tries to transfer energy to nodes that do not receive their optimal energy during the AN period.}  

We derive the closed-form gradient vector of $\mathcal{L}$ with respect to $\bm{\tau}$ as
\begin{align}
\nabla_{\bm{\tau}}\mathcal{L} &= -\big(\log (1+\zeta_i \frac{E_i}{\tau_i})+\frac{1}{1+\zeta_i \frac{E_i}{\tau_i}}\nonumber\\& - \log (1+\xi_i \frac{E_i}{\tau_i})-\frac{1}{1+\xi_i \frac{E_i}{\tau_i}}\big)+\nu.  \label{equ:Gradient}
\end{align}

We use the gradient descent method to update $\bm{\tau}$, and choose its step using the backtracking line search, which converges quickly. In the line search, we use $\Delta \bm{\tau}= -\nabla \mathcal{L}$, $\alpha = 0.5$, and $\beta = 0.5$ \cite{Boyd2004}. 
Algorithm \ref{alg:TauE0} expresses the complete steps to obtain the optimal solution of the problem.

\begin{figure}
\removelatexerror
\begin{algorithm}[H] 
\caption{optimal $\bm{E}_{0}$ and $\bm{\tau}$}\label{alg:TauE0}
\begin{algorithmic}[1]
\State \textbf{Initialization:} $l=0$, $\alpha = 0.5$, $\beta = 0.5$ , $\tau_{i}^{(1)} = \frac{1}{K+1},\ \forall i \in \{ 0,...,K\}$
\Repeat 
\State $l = l + 1$
\State $t= 1$, obtain $\bm{E_{0}^{(l)}}$ and $\nu^{(l)}$ using algorithm \ref{alg:FixTauE0}
\State calculate $\nabla_{\bm{\tau}}^{(l)}\mathcal{L}(\bm{\tau}^{(l)},\bm{E}_0^{(l)},\nu^{(l)})$ using equation \eqref{equ:Gradient} 
\Repeat
\State $\bm{\tau}^{(l+1)} = \bm{\tau}^{(l)} - t\nabla_{\bm{\tau}}^{(l)}\mathcal{L}$
 \State obtain $\bm{E_{0}}^{(l+1)}$ and $\nu^{(l+1)}$ using algorithm \ref{alg:FixTauE0}
 \State $t = \beta t$ 
\Until{($\mathcal{L}^{(l+1)}>\mathcal{L}^{(l)}-\alpha t || \nabla_{\bm{\tau}}^{(l)}\mathcal{L}||^2$)}
\Until{(($|\mathcal{L}^{(l+1)}-\mathcal{L}^{(l)}|>\epsilon$) and ($||\bm{\tau}^{(l+1)}-\bm{\tau}^{(l)} ||>\epsilon$))} 
\end{algorithmic}
\end{algorithm}
\end{figure}

\section{Considering Fairness Among the Nodes}
In WPCN, nodes farther away from the BS receive less energy than the nearer ones and hence, need more energy to transmit information. This phenomenon is known as the double-near-far \cite{Guo2018} problem, and results in unfairness among the secrecy throughput of the nodes when the sum secrecy throughput is maximized.  To alleviate this issue, we propose max-min fair (MMF) and proportional fair (PLF) algorithms based on max-min and proportional fairness, respectively. The well-known max-min fairness maximizes the minimum secrecy throughput of all the nodes, which usually leads to similar throughput for all the nodes. However, it can decrease the sum secrecy throughput drastically when a node is far away from the BS. {\color{black} The proportional fairness tries to achieve a good trade-off between  the sum secrecy throughput and the fairness among the nodes.} 
In the following two subsections, we formulate and obtain the solutions for the max-min fairness and the proportional fairness. Similar to Section \ref{seq:SectionMaxSumSecrecy}, we use the two stage approach to convert the problems into tractable ones. Since blinding the non-transmitting nodes may have a negligible effect on the fairness among the nodes, we consider the same first stage as in Section \ref{seq:SectionMaxSumSecrecy}, and in the following two subsections we only consider the second stage.  
\subsection{Max-min Fairness} \label{SubSec:MaxMin}
The max-min fairness maximizes the minimum secrecy throughput of all the nodes, and it can be formulated as
\begin{maxi!}
{\bm{E}_{0},\bm{\tau}}{\min_{i}D_i^{sec}(\frac{E_i}{\tau_i}),}{}{}
\addConstraint{\sum_{i=1}^K (E_{0i}+\tau_i) = 1} \label{}
\addConstraint{{E}_{0i},\tau_i \ge 0,  \ \forall i \in \{ 1,...,K\}.}
\end{maxi!}
The difference between this problem and the problem defined by equation \eqref{PSeveralnode} (no fairness case) is that it maximizes the minimum secrecy throughput instead of the sum secrecy throughput. This problem is convex, and can be solved using existing convex optimization methods. Instead, similar to the no fairness case, we propose an algorithm that has less computational complexity and gives us intuition about the solution. We introduce a new slack variable, denoted by $\phi$, and reformulate the above problem as  
\begin{maxi!}
{\phi,\bm{E}_{0},\bm{\tau}}{\phi,}{\label{opt:SlackAdd}}{}
\addConstraint{D_i^{sec}(\frac{E_i}{\tau_i}) \ge \phi, \ \forall i \in \{ 1,...,K\} } \label{eq:SlackConst}
\addConstraint{\sum_{i=1}^K (E_{0i}+\tau_i) =1} \label{}
\addConstraint{{E}_{0i},\tau_i \ge 0,  \ \forall i \in \{ 1,...,K\}.}
\end{maxi!}
The slack variable in fact represents the minimum throughput of all the nodes. Similar to the no fairness case, we consider the dual problem, which can be represented as
\begin{maxi!}
{\bm{\lambda},\nu}{\min_{\phi,\bm{E}_{0},\bm{\tau}}\ \mathcal{L}(\phi,\bm{E}_{0},\bm{\tau},\bm{\lambda},\nu),}{\label{equ:DualMaxMin}}{}
\addConstraint{\lambda_i,E_{0i},\tau_i \ge 0,\ \forall i \in \{ 1,...,K \},}
\end{maxi!}
where $\bm{\lambda} \triangleq (\lambda_1,...,\lambda_K)$ and 
\begin{align*}
\mathcal{L}({\phi},\bm{E}_{0},\bm{\tau},\bm{\lambda},\nu)&= -\phi + \sum_{i=1}^K\lambda_i \big(\phi - D_i^{sec}(\frac{E_i}{\tau_i})  \big) \\&+ \nu(\sum_{i=1}^K (E_{0i} + \tau_i)  -1).
\end{align*}

We use the alternating optimization method to solve the dual problem. 
 We assume a fixed $\bm{\tau}$ and optimize the other variables via Algorithm \ref{alg:FixTauE0MaxMin}. The optimality of Algorithm \ref{alg:FixTauE0MaxMin} is expressed in Proposition \ref{theo:E0OptMaxMin}. Then, we update $\bm{\tau}$ using the gradient descent method and again optimize the other variables for the updated $\bm{\tau}$. We repeat this procedure to obtain the optimal solution. The algorithm for this part is similar to the no fairness case, and we omit it for brevity. The max-min problem is more complex than the no fairness problem and has more optimization variables for a fixed $\bm{\tau}$. In the following proposition, we prove that Algorithm \ref{alg:FixTauE0MaxMin} yields the optimal solution for a fixed $\bm{\tau}$.  
\begin{prop} \label{theo:E0OptMaxMin}
Algorithm \ref{alg:FixTauE0MaxMin} yields the optimal solution of the problem defined by equation \eqref{equ:DualMaxMin} for a fixed $\bm{\tau}$. As noted in Table \ref{tab:Notations}, $\mathcal{B}_i(\frac{E_i}{\tau_i})\triangleq \mu_i\eta_iP_H \bigg(\frac{ \zeta_i-\xi_i}{(1+\zeta_i\frac{E_i}{\tau_i})(1+\xi_i\frac{E_i}{\tau_i})} \bigg)$. 
\end{prop}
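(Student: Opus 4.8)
The plan is to prove optimality through the KKT conditions, exactly as in the no-fairness case (Proposition~\ref{theo:E0Opt}), exploiting that the reformulated problem \eqref{opt:SlackAdd} is convex (its objective is linear in $\phi$ and the constraints \eqref{eq:SlackConst} are concave by Lemma~\ref{lem:ConcaveDi}) and that Slater's condition holds, so that strong duality is in force and any primal--dual pair satisfying the KKT conditions is globally optimal. First I would minimize the Lagrangian $\mathcal{L}(\phi,\bm{E}_0,\bm{\tau},\bm{\lambda},\nu)$ over the primal variables for fixed $\bm{\lambda},\nu$ and fixed $\bm{\tau}$. Stationarity in the slack variable gives $\partial_\phi \mathcal{L} = -1 + \sum_{i=1}^K \lambda_i = 0$, i.e. $\sum_{i=1}^K \lambda_i = 1$ (otherwise the inner minimum is $-\infty$); substituting this back cancels $\phi$ and leaves $\mathcal{L} = -\sum_{i=1}^K \lambda_i D_i^{sec}(E_i/\tau_i) + \nu\big(\sum_{i=1}^K(E_{0i}+\tau_i)-1\big)$, which decouples across the $E_{0i}$.

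Next I would carry out the minimization over each $E_{0i}\ge 0$. Using $E_i = \mu_i\eta_i P_H\big(E_{0i} + \sum_{j=1}^{i-1}\tau_j a_{j,i}\big)$ and the chain rule, one checks that $\partial D_i^{sec}/\partial E_{0i} = \mathcal{B}_i(E_i/\tau_i)$, precisely the quantity tabulated in Table~\ref{tab:Notations}. Hence $\partial \mathcal{L}/\partial E_{0i} = -\lambda_i\mathcal{B}_i(E_i/\tau_i) + \nu$, and the stationarity/complementarity for the bound $E_{0i}\ge 0$ reads: either $E_{0i}>0$ with $\lambda_i\mathcal{B}_i(E_i/\tau_i)=\nu$, or $E_{0i}=0$ with $\lambda_i\mathcal{B}_i(E_i/\tau_i)\le\nu$. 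Because $D_i^{sec}$ is concave (Lemma~\ref{lem:ConcaveDi}), $\mathcal{B}_i$ is strictly decreasing in $E_i$, so for given $\lambda_i,\nu$ the equation $\lambda_i\mathcal{B}_i=\nu$ has a unique root and the optimal $E_{0i}$ is that root clipped at zero --- the step I expect Algorithm~\ref{alg:FixTauE0MaxMin} to implement, in analogy with step~\ref{step:NonNeededE0} of Algorithm~\ref{alg:FixTauE0}.

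It then remains to pin down the dual variables. Complementary slackness for \eqref{eq:SlackConst} forces $D_i^{sec}(E_i/\tau_i)=\phi$ for every node with $\lambda_i>0$, so the nodes that receive power during $\tau_0$ are all driven to the common secrecy-throughput level $\phi$, while a node whose AN-harvested energy already yields throughput above $\phi$ gets $E_{0i}=0$ and $\lambda_i=0$ --- the familiar max--min water-filling structure. I would then argue that the scalar residual of the budget equality $\sum_{i=1}^K(E_{0i}+\tau_i)=1$ together with the normalization $\sum_{i=1}^K\lambda_i=1$ determines $\nu$, $\phi$, the $\lambda_i$ and the $E_{0i}$ monotonically: decreasing $\nu$ raises every root $E_i$ and hence the consumed energy, so the budget residual is monotone and the bisection in Algorithm~\ref{alg:FixTauE0MaxMin} converges to the unique feasible value. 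Exhibiting a point that satisfies all these relations certifies it as a KKT point, hence globally optimal.

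The main obstacle is the coupling between the two complementarity systems: the multiplier $\nu$ ties the $E_{0i}$ together through the energy budget, while the multipliers $\lambda_i$ (constrained by $\sum_i\lambda_i=1$) simultaneously equalize the active throughputs to $\phi$, so I must show these can be satisfied consistently and that the saturated nodes with $E_{0i}=0$ do not violate $\lambda_i\mathcal{B}_i\le\nu$. Carefully handling the non-smoothness of the $\min$ objective at ties and verifying that the zero-clipping in the $E_{0i}$-update preserves the stationarity inequalities for the saturated nodes is where the real work lies.
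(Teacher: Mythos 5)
Your KKT scaffolding is sound and matches the paper's: stationarity in $\phi$ gives $\sum_{i}\lambda_i=1$, stationarity in $E_{0i}$ gives $\lambda_i\mathcal{B}_i(\frac{E_i}{\tau_i})=\nu$ whenever $E_{0i}>0$, complementary slackness equalizes the throughputs of the active nodes at $\phi$, and your computation $\partial D_i^{sec}/\partial E_{0i}=\mathcal{B}_i(\frac{E_i}{\tau_i})$ is correct. The gap is in the algorithmic step you propose to satisfy these conditions. You have Algorithm~\ref{alg:FixTauE0MaxMin} bisect on $\nu$ and, per node, solve $\lambda_i\mathcal{B}_i(\frac{E_i}{\tau_i})=\nu$ for $E_i$ ``clipped at zero,'' in analogy with Algorithm~\ref{alg:FixTauE0}. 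But unlike the no-fairness case, this equation cannot be solved node by node: $\lambda_i$ is itself an unknown dual variable, determined only jointly with the active set and the common level $\phi$ (through $\lambda_i=\nu/\mathcal{B}_i$ and $\sum_i\lambda_i=1$). Your monotonicity claim --- ``decreasing $\nu$ raises every root $E_i$'' --- implicitly holds the $\lambda_i$ fixed while $\nu$ varies, which is exactly the circularity; you yourself flag this coupling as ``the main obstacle \dots where the real work lies,'' but you leave it unresolved, so the proof as proposed does not close.

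The paper's proof dissolves the coupling by parametrizing the search by the \emph{primal} slack $\phi$ rather than the dual $\nu$. For a candidate $\phi$, complementary slackness pins every active node to $D_i^{sec}(\frac{E_i}{\tau_i})=\phi$, which inverts in closed form with no dual variables involved (step~\ref{alg:Step7Ej} of Algorithm~\ref{alg:FixTauE0MaxMin}, $E_j=\tau_j(e^{\phi/\tau_j}-1)/(\zeta_j-\xi_j e^{\phi/\tau_j})$); nodes whose AN-harvested energy already yields throughput above $\phi$ are removed with $E_{0i}=0$ and $\lambda_i=0$, producing the active set $B_\phi$. The budget residual $Er(\phi)=\sum_{i=1}^K(E_{0i}+\tau_i)-1$ is then increasing and continuous in $\phi$, with $Er(0)<0$ and $Er(\phi)\to\infty$ as $\phi\to\phi_{max}=\min_i\tau_i(\log\zeta_i-\log\xi_i)$ --- a bracketing argument resting on the saturation limit $\lim_{E_i\to\infty}D_i^{sec}(\frac{E_i}{\tau_i})=\tau_i(\log\zeta_i-\log\xi_i)$, which your sketch never establishes --- so bisection on $\phi$ finds $\phi^*$ with $Er(\phi^*)=0$. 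Only afterwards are the duals recovered in closed form, $\lambda_i^*=\nu^*/\mathcal{B}_i(\frac{E_i^*}{\tau_i})$ on $B_{\phi^*}$ and $\nu^*=1\big/\sum_{i\in B_{\phi^*}}\mathcal{B}_i(\frac{E_i^*}{\tau_i})^{-1}$ from the normalization $\sum_i\lambda_i=1$, and one checks that the inactive nodes satisfy their stationarity inequalities trivially since $\lambda_i=0$. If you replace your $\nu$-bisection with this $\phi$-parametrization, the rest of your KKT verification goes through essentially unchanged.
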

\begin{proof}
The proof is given in appendix \ref{sec:PropositionLambdaMaxMin}.
\end{proof}

\begin{figure}
\removelatexerror
\begin{algorithm}[H]
\caption{optimal $\bm{E}_{0}$, $\bm{\lambda}$, $\phi$ and $\nu$ in the max-min fairness case}\label{alg:FixTauE0MaxMin}
\begin{algorithmic}[1]
\State \textbf{Initialization:} $\phi_{min} = 0,\ \phi_{max} = \min_{i} \tau_i(\log \zeta_i-\log\xi_i),\ \bm{\lambda}=(0,...,0) $, $flag = 1$ 
\While{$Er(\phi) > \epsilon$} 
\State $\phi = \frac{\phi_{min}+\phi_{max}}{2}$, $B_\phi=\{1,...,K\}$ \label{step:MinimumThr}
\While{($flag$)}	\label{step:BeginingBPhi}
\State $flag = 0$
\For{$j \in B_\phi$}
\State $E_j = \frac{\tau_j(e^{\frac{\phi}{\tau_j}}-1)}{\zeta_j-\xi_j e^{\frac{\phi}{\tau_j}}}$ \label{alg:Step7Ej}
\State $E_{0j} = \frac{E_j}{\mu_j\eta_j P_H}-\sum_{l=1}^{j-1}\tau_l a_{l,j}$\label{step:sumchannel} \label{alg:Step8E0j}
\If{$E_{0j}<0$}
\State $flag = 1$, $ E_{0j} = 0, \ B_\phi=B_\phi\backslash \{j\}$,\ 
\State break
\EndIf  \label{alg:StepCheck}
\EndFor
\EndWhile	\label{step:EndBPhi}
\State $Er(\phi) \triangleq \sum_{i=1}^K (E_{0i}+\tau_i) - 1$\label{step:E0Available}
\If{$Er(\phi)<0$}
\State $\phi_{min} = \phi $
\Else 
\State $\phi_{max} = \phi$
\EndIf
\EndWhile 
\State $E_i = \mu_i\eta_i P_H(E_{0i}+\sum_{j=1}^{i-1}\tau_j a_{j,i}),\ \forall i \in \{ 1,...,K \}$
\State $\nu = \frac{1}{\sum_{i \in B_\phi}\frac{1}{\mathcal{B}_i(\frac{E_i}{\tau_i})}}$
\For{$i \in B_{\phi}$}
\State $\lambda_i = \frac{\nu}{\mathcal{B}_i(\frac{E_i}{\tau_i})}$
\EndFor
\end{algorithmic}
\end{algorithm}
\end{figure}

{\color{black}Algorithm \ref{alg:FixTauE0MaxMin} considers an initial interval for $\phi$ and sets $\phi=\frac{\phi_{min}+\phi_{max}}{2}$ in step \ref{step:MinimumThr} in each iteration. As previously mentioned, $\phi$ represents the minimum throughput of all the nodes, so the throughput of each node must be at least $\phi$. The needed energy to reach $\phi$ is derived in step \ref{alg:Step7Ej}, and steps \ref{alg:Step8E0j} to \ref{alg:StepCheck} remove the nodes that harvest enough energy during the AN period. 
Then it checks the availability of the needed energy in step \ref{step:E0Available} \big($\sum_{i=1}^K (E_{0i}+\tau_i) - 1$ is actually $\sum_{i=1}^K E_{0i} - \tau_0$\big) and updates the interval of the optimal $\phi$ accordingly. The algorithm terminates when the needed energy \big($\sum_{i=1}^K E_{0i}$\big) is equal to the available energy \big($\tau_0 = \tau_0\sum_{i=1}^K a_{0i}$\big). As the needed energy is an increasing continuous function of $\phi$, Algorithm \ref{alg:FixTauE0MaxMin} converges quickly. Finally, we obtain $\bm{\lambda}$ and $\nu$ as they are needed in the computation of the gradient with respect to $\bm{\tau}$.  }
 
\subsection{Proportional Fairness}
The proportional fairness tries to alleviate the unfairness among the nodes without drastically decreasing the sum secrecy throughput. {\color{black}This is achieved by  considering the sum of logarithm of the nodes' secrecy throughput as objective function. The farther a node is away from the BS, the smaller is its throughput. Increasing the throughput of such far away nodes will result in a large gain in the objective function at the cost of consuming more BS resources. The proportional fairness achieves a good trade-off. The PLF problem is formulated as}
\begin{maxi!}
{\bm{E}_{0},\bm{\tau}}{\sum_{i=1}^K\log{(D_i^{sec}(\frac{E_i}{\tau_i}))},}{\label{opt:LogFair}}{}
\addConstraint{\sum_{i=1}^K (E_{0i}+\tau_i) = 1} \label{}
\addConstraint{{E}_{0i},\tau_i \ge 0,  \ \forall i \in \{ 1,...,K\}.}
\end{maxi!}
As proved in Lemma \ref{lem:ConcaveDi}, $D_i^{sec}(\frac{E_i}{\tau_i})$ is concave and the logarithm of a concave function is still concave \cite{Boyd2004}. The objective function is a non-negative  sum of concave functions \big($\log{(D_i^{sec}(\frac{E_i}{\tau_i}))}$'s\big), hence is concave. In addition, the equality constraint is linear, so the PLF problem is convex. With similar arguments to the no fairness and  the max-min fairness cases, we propose a simpler method, which gives us intuition about the solution. We define a vector of slack variables, denoted by $\bm{\psi}=(\psi_1,...,\psi_K)$, and reformulate the problem as 
\begin{maxi!}
{\bm{\psi},\bm{\tau},\bm{E}_{0}}{\sum_{i=1}^K\log(\psi_i),}{\label{opt:SlackAddLog}}{}
\addConstraint{D_i^{sec}(\frac{E_i}{\tau_i}) \ge \psi_i, \ \forall i \in \{ 1,...,K\} }  \label{eq:SlackConst}
\addConstraint{\sum_{i=1}^K (E_{0i}+\tau_i) =1} \label{}
\addConstraint{{E}_{0i},\tau_i \ge 0,  \ \forall i \in \{ 1,...,K\}.}
\end{maxi!}

{\color{black}We show in the proof of Proposition \ref{prop:QPFTauFix} that in the optimal solution, we have $\psi_i^* = D_i^{sec}(\frac{E_i^*}{\tau_i^*})$.} Therefore, the optimal solution of this problem is the same as the original problem. Similar to the no fairness and the max-min fairness cases, we consider the dual problem and use the alternating optimization method to solve this problem. Compared to Section \ref{seq:SectionMaxSumSecrecy} (no fairness case), in each of the fairness problems, we define suitable slack variables and solve a new complex optimization problem for a fixed $\bm{\tau}$. The dual problem can be expressed as 
\begin{maxi!}
{\bm{\lambda},\nu}{\min_{\bm{\psi},\bm{\tau},\bm{E}_{0}}\ \mathcal{L}(\bm{\psi},\bm{\tau},\bm{E}_{0},\bm{\lambda},\nu),}{\label{equ:DualProp}}{}
\addConstraint{\lambda_i,E_{0i},\tau_i \ge 0,\ \forall i \in \{ 1,...,K \}, } 
\end{maxi!}
where $\bm{\lambda} \triangleq (\lambda_1,...,\lambda_K)$ and 
\begin{align}
\mathcal{L}(\bm{\psi},\bm{\tau},\bm{E}_{0},\bm{\lambda},\nu)&=\sum_{i=1}^K 
\big(-\log{(\psi_i)} + \lambda_i (\psi_i - D_i^{sec}(\frac{E_i}{\tau_i}) ) \big) 
\nonumber\\&+\nu(\sum_{i=1}^K (E_{0i} + \tau_i)  -1). \nonumber
\end{align}

{\color{black}We set $\bm{\tau}$ to an initial value and obtain the optimal value of the other variables via algorithm \ref{alg:FixTauE0Prop}. The optimality of this algorithm is proved in Proposition \ref{prop:QPFTauFix}.} Then, the gradient descent method is used to obtain the new $\bm{\tau}$, which is then used to obtain the new optimal of the other variables. Due to the convexity of the above problem, the optimal point is obtained using this iterative method. The algorithm for this part is similar to the previous cases and is omitted for brevity.

\begin{prop}\label{prop:QPFTauFix}
Algorithm \ref{alg:FixTauE0Prop} yields the optimal solution of the problem defined by equation \eqref{equ:DualProp} for a fixed $\bm{\tau}$.
\end{prop}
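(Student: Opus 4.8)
The plan is to prove optimality by verifying that the output of Algorithm~\ref{alg:FixTauE0Prop} satisfies the Karush--Kuhn--Tucker (KKT) conditions of the inner minimization in \eqref{equ:DualProp}. For a fixed $\bm{\tau}$ the primal problem \eqref{opt:SlackAddLog} is convex: each $\log(\psi_i)$ is concave, each constraint $D_i^{sec}(\frac{E_i}{\tau_i}) \ge \psi_i$ is convex because $D_i^{sec}$ is concave by Lemma~\ref{lem:ConcaveDi}, and the remaining constraints are linear. Since Slater's condition holds, the KKT conditions are necessary and sufficient, so exhibiting a primal--dual point that satisfies them certifies global optimality.

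First I would write the stationarity conditions of $\mathcal{L}(\bm{\psi},\bm{\tau},\bm{E}_0,\bm{\lambda},\nu)$ with respect to $\psi_i$ and $E_{0i}$. Differentiating in $\psi_i$ gives $-1/\psi_i + \lambda_i = 0$, hence $\lambda_i = 1/\psi_i > 0$; complementary slackness on the constraint $D_i^{sec}(\frac{E_i}{\tau_i}) \ge \psi_i$ then forces $\psi_i^\ast = D_i^{sec}(\frac{E_i^\ast}{\tau_i^\ast})$, which is precisely the claim stated before the proposition and which shows that the reformulation \eqref{opt:SlackAddLog} is equivalent to \eqref{opt:LogFair}. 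Differentiating in $E_{0i}$, and using $\partial E_i / \partial E_{0i} = \mu_i\eta_iP_H$ together with the identity $\partial D_i^{sec}/\partial E_{0i} = \mathcal{B}_i(\frac{E_i}{\tau_i})$ --- which is exactly how $\mathcal{B}_i$ was defined in Table~\ref{tab:Notations} --- the interior stationarity condition becomes $\nu = \lambda_i\,\mathcal{B}_i(\frac{E_i}{\tau_i})$. Eliminating $\lambda_i$ and $\psi_i$ between the two conditions yields the single defining equation $\mathcal{B}_i(\frac{E_i}{\tau_i}) = \nu\,D_i^{sec}(\frac{E_i}{\tau_i})$ for the optimal energy of every active node, and the recovered multipliers $\lambda_i = \nu/\mathcal{B}_i(\frac{E_i}{\tau_i})$ match the values produced at the end of the algorithm.

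Next I would show that this defining equation is well posed and that the bisection on $\nu$ is valid. Viewing $D_i^{sec}$ and $\mathcal{B}_i$ as functions of $E_i$ for fixed $\tau_i$, one checks that $D_i^{sec}$ is strictly increasing (because $\zeta_i > \xi_i$) and vanishes as $E_i \to 0^+$, while $\mathcal{B}_i$ is strictly positive and strictly decreasing in $E_i$; hence the ratio $\mathcal{B}_i/D_i^{sec}$ decreases strictly from $+\infty$ to $0$, so for each $\nu > 0$ there is a unique root $E_i(\nu)$, and $E_i(\nu)$ is itself decreasing in $\nu$. I would then treat the boundary: the implied $E_{0i} = E_i/(\mu_i\eta_iP_H) - \sum_{j<i}\tau_j a_{j,i}$ can be negative, meaning the node already harvests more than its target from the AN alone; the correct KKT outcome there is $E_{0i}^\ast = 0$ with the one-sided stationarity inequality $\nu \ge \lambda_i\,\mathcal{B}_i$, so the node is clamped to zero and dropped from the active set, exactly as the algorithm does. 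Finally, since each $E_{0i}(\nu)$ is nonincreasing in $\nu$, the aggregate $\sum_i\big(E_{0i}(\nu)+\tau_i\big)$ is monotone in $\nu$, so the outer bisection converges to the unique $\nu$ satisfying the equality constraint; together with the recovered $\bm{\lambda}$ this assembles a complete KKT point and closes the argument.

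I expect the main obstacle to be the defining equation $\mathcal{B}_i(\frac{E_i}{\tau_i}) = \nu\,D_i^{sec}(\frac{E_i}{\tau_i})$ itself. Unlike the no-fairness case, where the analogous stationarity condition collapsed to a quadratic with the closed form of step~\ref{step:optimalEnergy} in Algorithm~\ref{alg:FixTauE0}, here the equation couples a rational function ($\mathcal{B}_i$) to a logarithmic one ($D_i^{sec}$) and is transcendental, so the inner solve for $E_i(\nu)$ must proceed numerically. The delicate part of the proof is therefore the monotonicity underpinning that inner solve (uniqueness of $E_i(\nu)$) and the monotonicity of $\sum_i E_{0i}(\nu)$ underpinning the outer bisection; establishing both reduces to signing the derivative of $\mathcal{B}_i/D_i^{sec}$, which I would carry out using $\zeta_i > \xi_i$ and the explicit expressions above.
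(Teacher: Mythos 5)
Your proposal is correct and follows essentially the same route as the paper's proof: it verifies the KKT conditions of the problem \eqref{equ:DualProp} for fixed $\bm{\tau}$, obtaining $\lambda_i = 1/\psi_i$, $\psi_i^* = D_i^{sec}(\frac{E_i^*}{\tau_i})$ via complementary slackness, and the defining equation $\mathcal{B}_i(\frac{E_i}{\tau_i}) = \nu\, D_i^{sec}(\frac{E_i}{\tau_i})$ for active nodes, then justifies the outer bisection by monotonicity of $Er(\nu)$ in $\nu$, exactly as in Appendix \ref{sec:PropositionLambdaProp}. Your extra detail---the strict monotonicity of the ratio $\mathcal{B}_i/D_i^{sec}$ from $+\infty$ to $0$ (guaranteeing a unique inner root $E_i(\nu)$) and the one-sided stationarity inequality $\nu \ge \lambda_i \mathcal{B}_i(\frac{E_i}{\tau_i})$ at clamped nodes with $E_{0i}=0$---merely makes explicit what the paper dismisses as ``straightforward,'' so the two arguments coincide.
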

\begin{proof}
The proof is given in appendix \ref{sec:PropositionLambdaProp}.
\end{proof}

\begin{figure}
\removelatexerror
\begin{algorithm}[H]
\caption{optimal $\bm{E}_{0}$, $\bm{\psi}$, $\bm{\lambda}$ and $\nu$ in the proportional fairness case}\label{alg:FixTauE0Prop}
\begin{algorithmic}[1]
\State \textbf{Initialization:} $\bm{\lambda}=(0,...,0)$, $B=\{1,...,K\}$, $flag = 1$. Set $\nu_{min},\nu_{max} $ such that the optimal $\nu$ lies between them. 
\While{$Er(\nu) > \epsilon$}
\State $\nu = \frac{\nu_{min}+\nu_{max}}{2}$ 
\For{$i \in B$}
\State Solve $D_i^{sec}(\frac{E_i}{\tau_i}) -\frac{\mathcal{B}_i(\frac{E_i}{\tau_i})}{\nu} = 0 $ to find $E_i$ \label{Alg:Step:PropMainEq}
\State $E_{0i} = \max{(0,\frac{E_i}{\mu_i\eta_i P_H}-\sum_{l=1}^{i-1}\tau_l a_{l,i})}$ \label{step:sumchannel}
\EndFor
\State $Er(\nu) \triangleq \sum_{i=1}^K (E_{0i}+\tau_i) - 1$ \label{Alg:Stp:Err}
\If{$Er(\nu)<0$}
\State $\nu_{min} = \nu $
\Else 
\State $\nu_{max} = \nu$
\EndIf
\EndWhile
\State $E_i = \mu_i\eta_i P_H(E_{0i}+\sum_{j=1}^{i-1}\tau_j a_{j,i}),\ \forall i \in \{ 1,...,K \}$
\For{$i \in B$}
\State $\psi_i = D_i^{sec}(\frac{E_i}{\tau_i})$, $ \lambda_i = \frac{1}{D_i^{sec}(\frac{E_i}{\tau_i})}$
\EndFor
\end{algorithmic}
\end{algorithm}
\end{figure}
Algorithm \ref{alg:FixTauE0Prop} is similar to Algorithm \ref{alg:FixTauE0} (no fairness case), and yields the optimal $E_i$'s to maximize its objective function. Similarly, $Er(\nu)$ in step \ref{Alg:Stp:Err} is a decreasing function of $\nu$ and $Er(0) > 0$. 
Hence, it is straightforward to obtain $\nu^*$ such that $Er(\nu^*) =0$. The difference between this case and the no fairness case is in their proposed $E_i$'s. Algorithm \ref{alg:FixTauE0} proposes $E_i$ in step \ref{step:optimalEnergy}
 via solving the equation $\mathcal{B}_i(\frac{E_i}{\tau_i})=\nu$, while Algorithm \ref{alg:FixTauE0Prop} solves the equation $\mathcal{B}_i(\frac{E_i}{\tau_i}) = \nu D_i^{sec}(\frac{E_i}{\tau_i})$ 
to propose $E_i$. {\color{black}In the PLF, we multiply $\nu$ by the throughput. Hence, we decrease the right-hand side of the equation for the farther nodes, which have lower throughput, compared to the nearer ones. This multiplication increases $E_i$ of the farther nodes as $E_i$ is a decreasing function of $\mathcal{B}_i(\frac{E_i}{\tau_i})$. Since the throughput is an increasing function of the harvested energy, the PLF decreases the difference between the throughput of the farther and the nearer nodes, compared to the no fairness case.}
 The equation to obtain the optimal $E_i$ (step \ref{Alg:Step:PropMainEq}) is easy to solve as it is an increasing function of $E_i$ and is less than zero for $E_i= 0$. Finally, we obtain $\bm{\lambda}$ and $\bm{\psi}$ to compute the gradient of the dual problem with respect to $\bm{\tau}$.
\color{black}
   
\section{Numerical Simulations}
In this section, our algorithms are evaluated using various numerical simulations. We consider the scenario of Fig. \ref{fig:Scenario} with a multi-antenna BS at the origin and 4 nodes ($K=4$), which locate at $(1.8,0)$, $(2,\frac{{\pi}}{4})$, $(2.2,\frac{{\pi}}{2})$, and $(5,\frac{{3\pi}}{4})$ polar coordinates. 
We have considered an asymmetric scenario, which is more representative of practical scenarios. We assume using 2.4GHz ISM band, variance of noise $\sigma^2=-100dBm$, and for simplicity, we consider the energy convergence efficiency of all the nodes to be one ($\eta_i = 1$). We consider both i.i.d Rayleigh fading distribution (with parameter one) and Rician distribution (with $K_{Rician}=10$) for the CSI of the channels between the BS and the nodes. For the channels between the nodes, we only consider Rayleigh fading. The path loss exponent is considered to be 3. We first evaluate the sum secrecy throughput performance and then the proposed MMF and PLF fair algorithms    
 
\subsection{Sum Secrecy Throughput}
In this subsection, we compare the SSTM with the following three algorithms: 
\begin{itemize}
\item
\textbf{Uuniform time slotting and weights (UTW)}, the BS uses the same beamforming weights for all the nodes $(a_{i,j}=\frac{1}{K})$, and the duration of all the time slots are equal to each other $(\tau_i = \frac{1}{K+1})$. 
\item 
\textbf{Uniform time slotting (UT)}, {\color{black}the BS applies the first stage of the SSTM (Blinding the non-transmitting nodes) during $\tau_1$ up to $\tau_K$, but all the time slots are equal to each other, and the beamforming weights are the same during $\tau_0$. }
\item 
\textbf{Uniform blinding (UB)}, the BS allocates the same beamforming weights to all the nodes during the information transmission time slots, but the beamforming weights during the non-information transmission time slot and  the duration of the time slots are similar to the proposed algorithm. This algorithm just applies the second stage of the SSTM. It uses a uniform beamforming instead of the beamforming obtained in the first stage of the SSTM.
\end{itemize}
\begin{figure}
\centering
\includegraphics[width = 1\linewidth]{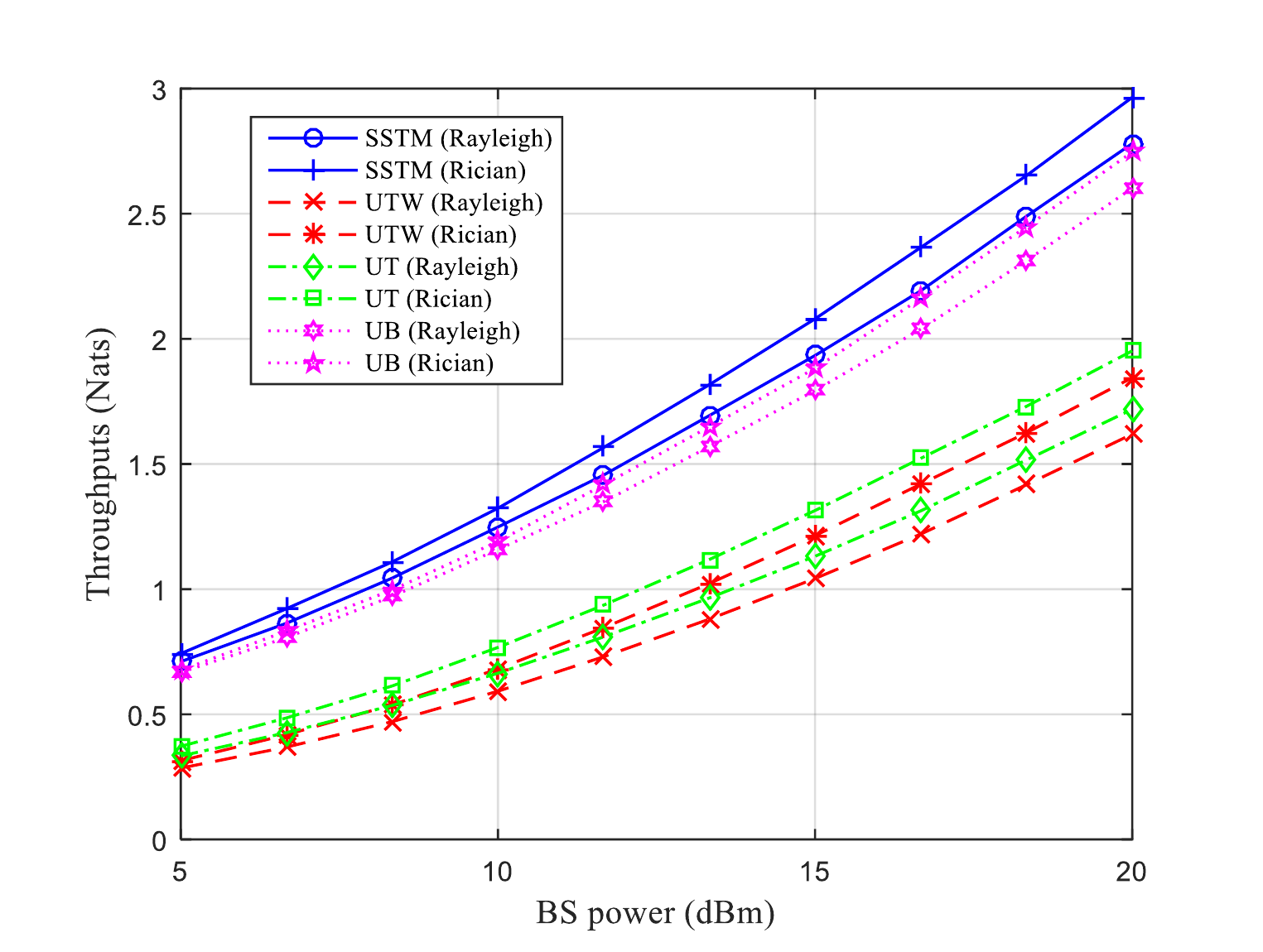}
\caption{Throughput vs BS power with $N = 50$ and $K_{Rician}=10$}
\label{fig:MultiAntennaPh}
\end{figure}

We plot the sum secrecy throughput versus the BS power for Rayleigh and Rician fading in Fig. \ref{fig:MultiAntennaPh}. {\color{black}As it can be seen, the UTW algorithm and the SSTM algorithm have the worst and the best performance, respectively. The UT and the UB algorithms are better than the UTW algorithm, and it shows that applying each stage of the SSTM improves the performance significantly. As the second stage of the SSTM maximizes the sum secrecy throughput directly, the UB has better performance comapred to the UT.
 In all algorithms the performance in Rician model is much better as we have line of sight channel between the BS and the nodes.}

%
%

\color{black}
\subsection{Fairness Among the Nodes}
In this subsection, we compare the performance of the two proposed fair algorithms with the no fairness case. The sum secrecy throughput is presented in Fig. \ref{fig:FairTotalThr}, and as it can be seen the no fairness case  and the max-min fairness case have the highest and the lowest sum secrecy throughput, respectively. The secrecy throughput of each node is presented in Fig. \ref{fig:FairEachNode} to compare the fairness of the three algorithms. For the MMF algorithm, the throughput of all nodes are almost similar to each other. {\color{black}It should be noted that the nodes obtain part of their energies from the AN, and we do not consider the fairness in that stage. This results in a difference between the throughput of the nodes in the MMF algorithm. The MMF and the PLF algorithms increase the throughput of the farther nodes (3 and 4) and decrease the throughput of the nearer nodes (1 and 2).} In Fig. \ref{fig:FairFarthestNode}, we compare the secrecy throughput of node 4, which has the worst channel condition. As it can be seen, the MMF algorithm has the highest secrecy throughput for this node. The MMF, on the otherhand, has the lowest sum secrecy throughput among the three algorithms, as shown in Fig. \ref{fig:FairTotalThr}. For the no fairness case, the secrecy throughput of node 4 is negligible, and therefore we can't see it in Fig. \ref{fig:FairEachNode}

\begin{figure}
\centering
\includegraphics[width = 1\linewidth]{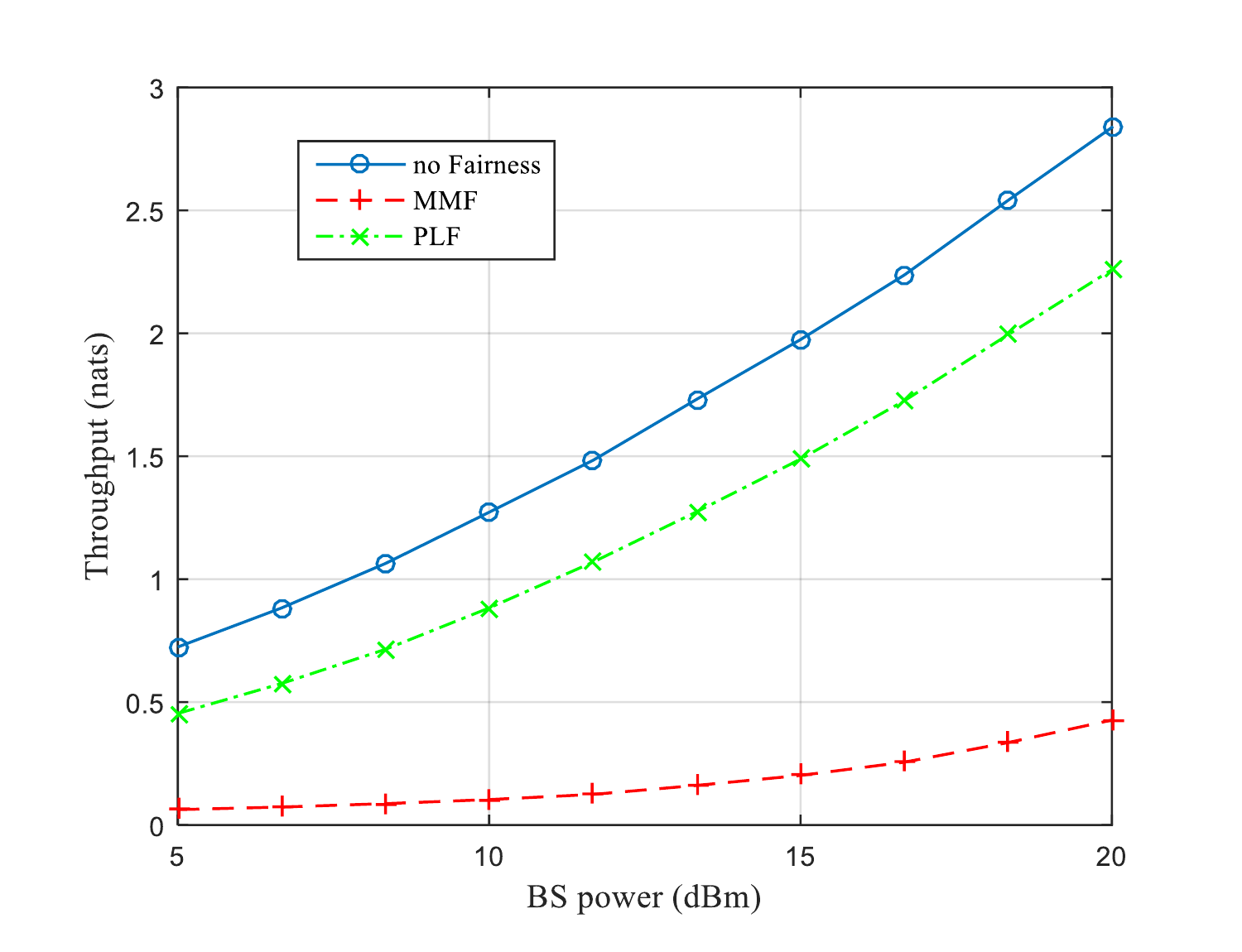}
\caption{Comparison of sum secrecy throughput vs BS power with $N=50$}
\label{fig:FairTotalThr}
\end{figure}
\begin{figure}
\centering
\includegraphics[width = 1\linewidth]{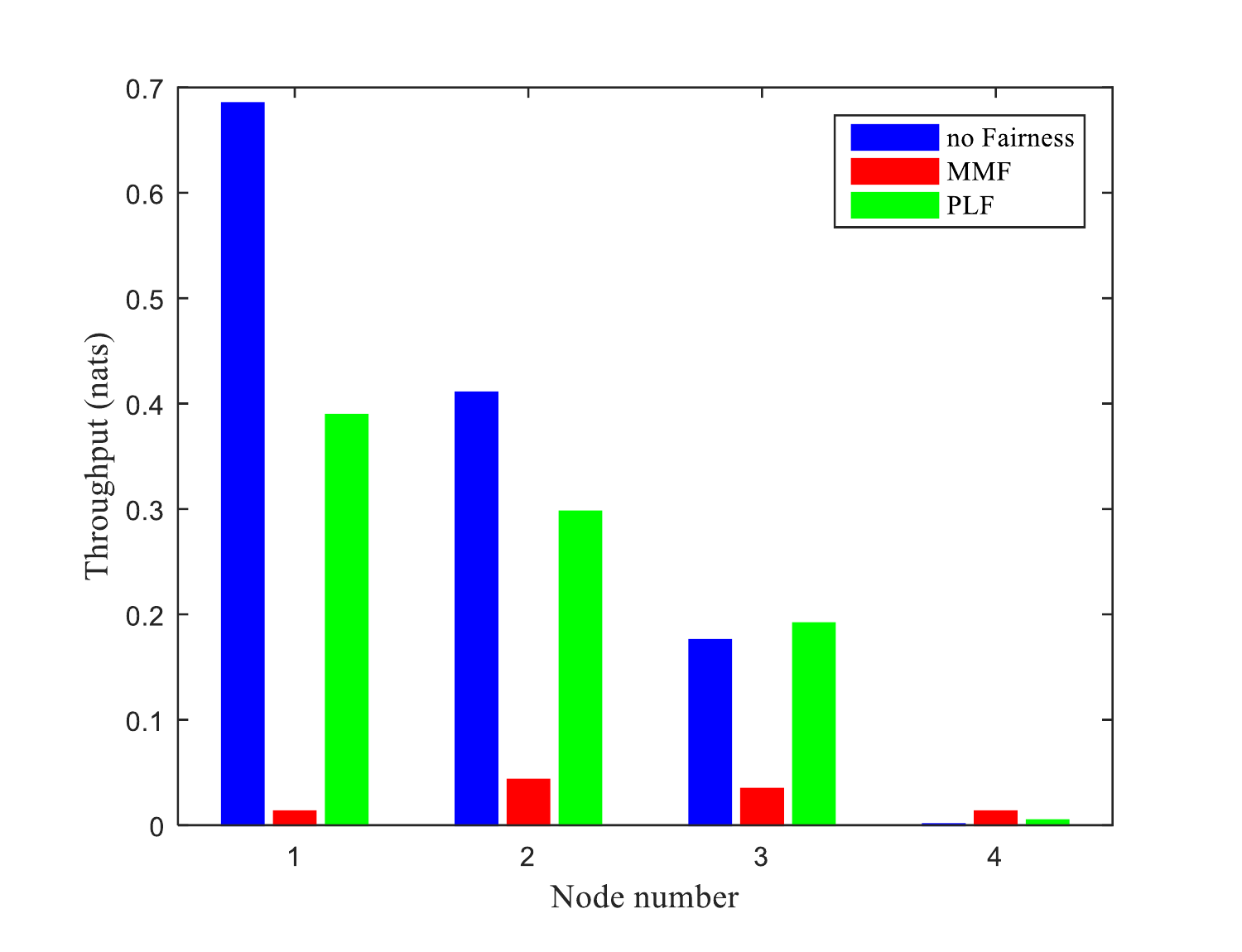}
\caption{Each node throughput with $N = 50$, $P_H = 10\ dBm$}
\label{fig:FairEachNode}
\end{figure}
\begin{figure}
\centering
\includegraphics[width = 1\linewidth]{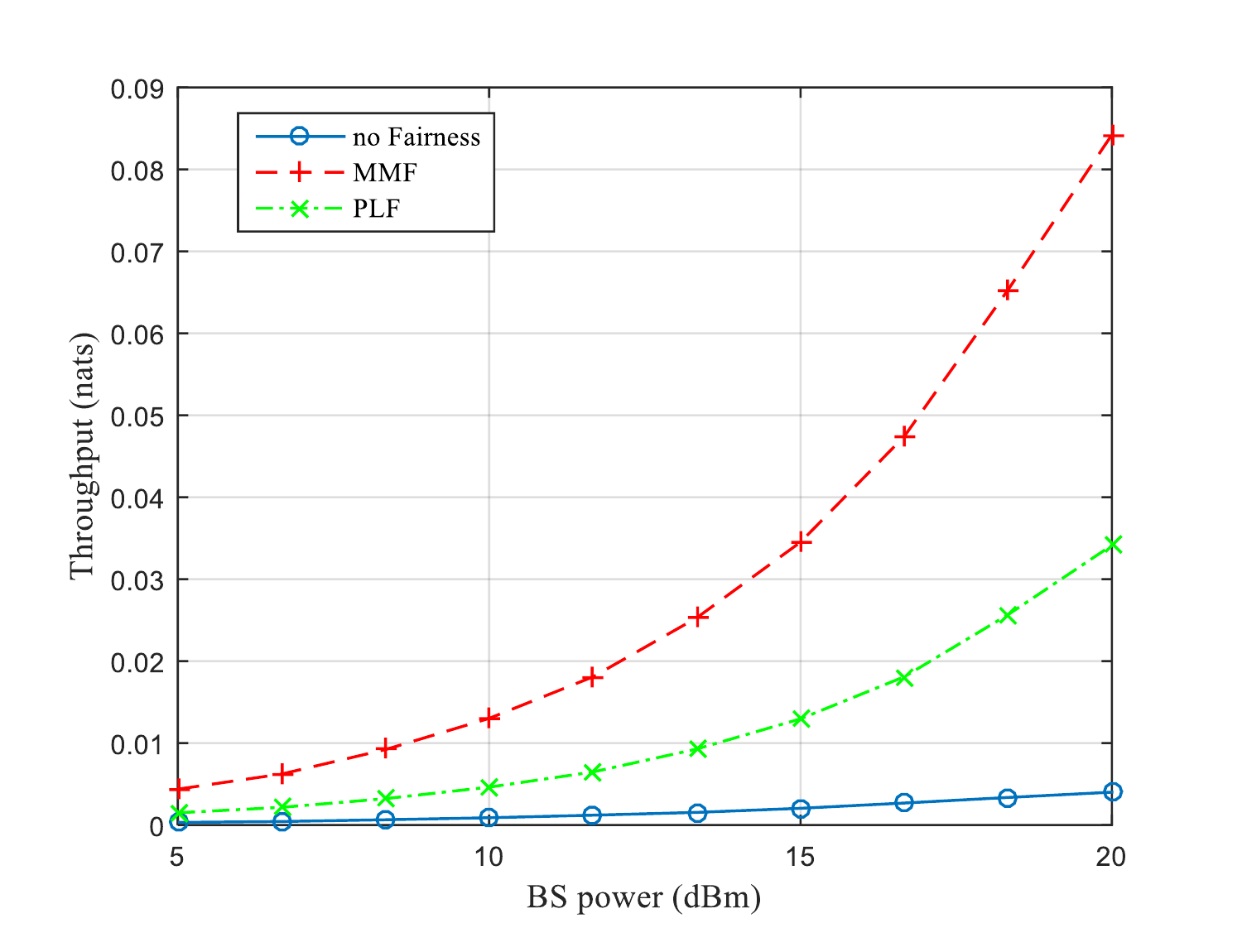}
\caption{Comparison of Throughput of node 4 (the weakest node) vs BS power with $N=50$}
\label{fig:FairFarthestNode}
\end{figure}
\section{Conclusion}
In this paper, the sum secrecy throughput optimization problem in a full-duplex WPCN has been studied. The BS transmits energy all the time, and the nodes transmit their information sequentially to the BS. Each node harvests energy from the beginning until its transmission time, and uses the harvested energy to transmit its information during its allocated time slot. During the transmission of each node, all the other nodes are considered as potential eavesdroppers. The secrecy rate for each node has been derived, and the sum secrecy throughput maximization problem has been formulated. The problem is non-convex, so a suboptimal two stage approach has been proposed to convert the problem into convex optimization problems. In the first stage, the BS optimizes its beamforming vectors to maximize the interference of the non-transmitting nodes. Then, the optimal beamforming vector in the initial non-information transmission time slot and the optimal time slots are obtained. Maximizing the sum secrecy throughput may result in unfairness among the nodes as a consequence of the double-near-far problem. To alleviate this issue, we have proposed the MMF and the PLF algorithms. Through numerical simulations, we first have showed that each of our stages towards optimizing the sum secrecy throughput improves the performance significantly. {\color{black} We have also compared the two proposed fair algorithms with the no fairness case. The PLF algorithm balances between the sum secrecy throughput and the fairness among the nodes, while the MMF algorithm benefits the node with the worst channel condition.}

\color{black}
\appendices
\section{Proof of Proposition \ref{prop:Blinding}}\label{PropositionBeta}
Without loss of generality, we prove subproblem $i \in \{ 1,...,K \}$. The following problem is equivalent to the problem defined by equation \eqref{minEaves}.
\begin{mini!}
{\Phi,\bm{a}_{i}}{\Phi,}{}{}
\addConstraint{\frac{|h_{i,j}|^2}{\sigma^2+\mu_ja_{i,j}P_H}\le \Phi,\ \forall j \in \{ 1 ,...,K \} \backslash \{ i\}} \label{}
\addConstraint{\sum_{j=1}^Ka_{i,j} = 1}
\addConstraint{a_{i,j} \ge 0, \ \forall j \in \{ 1 ,...,K \}.}
\end{mini!}

This problem is convex, and the Slater condition holds, so we can use the Karush-Kuhn-Tucker (KKT) conditions to obtain the optimal solution. The Lagrangian function can be written as
\begin{align}
\mathcal{L} &= \Phi+\sum_{j=1,j\ne i}^K \lambda_j(\frac{|h_{i,j}|^2}{\sigma^2+\mu_ja_{i,j}P_H}-\Phi)\nonumber +\nu(\sum_{j=1}^K a_{i,j}-1).
\end{align}
The KKT conditions can be expressed as
\begin{align}
\lambda_j&(\frac{|h_{i,j}|^2}{\sigma^2+\mu_ja_{i,j}P_H}- \Phi) = 0,\ \forall j \in \{1,...,K\} \backslash \{ i \} \label{SlaterBeta}, \\
 &\frac{\partial \mathcal{L}}{\partial a_{i,j}}=0 \Rightarrow \nu = \frac{P_H \lambda_j \mu_j |h_{i,j}|^2}{(\sigma^2+\mu_ja_{i,j}P_H)^2}, \ \forall a_{i,j}\ne 0 , \label{BetaNeZero}\\
&\sum_{j=1}^K a_{i,j} = 1, \label{betaEquality}\\
&\ \  \sum_{j=1}^K \lambda_{j} = 1.
\end{align}

Equations \eqref{SlaterBeta} and \eqref{BetaNeZero} shows that in the optimal solution, we have either $\frac{|h_{i,j}|^2}{\sigma^2+\mu_ja_{i,j}P_H}= \Phi$ or $a_{i,j} = 0$. First, we assume that all $a_{i,j}$'s are non-zero and obtain them as follows:
\begin{align}
&\Phi= \frac{|h_{i,j}|^2}{\sigma^2+\mu_ja_{i,j}P_H} ,\forall j\ne i, 
\\&\Rightarrow a_{i,l}=\frac{1}{\mu_l P_H}(\frac{|h_{i,l}|^2}{|h_{i,j}|^2}(\sigma^2+a_{i,j}\mu_j P_H)-\sigma^2) , \forall j,l\ne i\\
&\Rightarrow  \sum_{l=1,l\ne i}^K \frac{1}{\mu_l P_H}(\frac{|h_{i,l}|^2}{|h_{i,j}|^2}(\sigma^2+a_{i,j}\mu_j P_H)-\sigma^2)  \stackrel{(b)}{=}1 \\
\Rightarrow &a_{i,j}=\frac{1 + \frac{\sigma^2}{P_H} \big( \sum_{l=1,l\ne i}^K \frac{1}{\mu_l}-\frac{1}{|h_{i,j}|^2}\sum_{l=1,l\ne i}^K \frac{|h_{i,l}|^2}{\mu_l} \big)}{\frac{\mu_j}{|h_{i,j}|^2}\sum_{l=1,l\ne i}^K \frac{|h_{i,l}|^2}{\mu_l}},
\end{align} 
where $(b)$ comes from equation \eqref{betaEquality}. $a_{i,j}$'s must be greater than or equal to zero, so for the computed $a_{i,j}$'s that are less than zero, we must set them to zero and remove them. Then, we compute the new $a_{i,j}$'s and continue this process until we found the optimal $\Phi$. For the optimal $\Phi$, we have $\frac{|h_{i,j}|^2}{\sigma^2}=\Phi$ if $a_{i,j}$ is greater than $0$, and $\frac{|h_{i,j}|^2}{\sigma^2}<\Phi$ if $a_{i,j}$ equals to $0$ (otherwise the dual function will tend to infinity \cite{Boyd2004}). 
 We have $K-1$ eavesdroppers, and in each step either the process terminates, or one of them is removed.
Hence, we have only one eavesdropper left in the $(K-2)^{th}$ step, and its weight is equal to 1. As a result, the process terminates at most in $K-2$ steps.     

\section{Proof of Lemma \ref{lem:ConcaveDi}}
\label{app:LemmaConcave}
\color{black} We prove that $D_i^{sec}(\frac{E_i}{\tau_i})$ is a concave function of $(E_{0i},\tau_1,...,\tau_{i})$, and therefore is a concave function of $\bm{E_0}$ and $\bm{\tau}$. It should be noted that $D_i^{sec}(\frac{E_i}{\tau_i})$ is a perspective function of $D'_i(E_i) \triangleq \log(1+\zeta_i E_i)-\log(1+\xi_i E_i)$. Hence, in order to prove that $D_i^{sec}(\frac{E_i}{\tau_i})$ is a concave function of $(E_{0i},\tau_1,...,\tau_{i})$, we only need to prove that $D'_i(E_i)$ is a concave function of $(E_{0i},\tau_1,...,\tau_{i-1})$ \cite{Boyd2004}. We prove that $D'_i(E_i)$ is a concave function via proving that its Hessian matrix, denoted by $\mathcal{H}_i$, is negative semidefinite. 
\begin{align}
{\mathcal{H}_i} =\alpha_i \bm{\beta}_i\bm{\beta_i}^T
\Rightarrow \forall \bm{v}\in \mathbb{R}^i ,\ \bm{v}^T\mathcal{H}_i\bm{v} = \alpha_i |\bm{v}^T\bm{\beta_i}|^2 ,
\end{align}
where
\begin{align}
\alpha_i &\triangleq (\eta_i\mu_iP_H)^2\bigg(\big(\frac{\xi_i}{1+\xi_iE_i}\big)^2 - \big(\frac{\zeta_i}{1+\zeta_iE_i}\big)^2\bigg)\stackrel{(b)}<0,\\
\bm{\beta}_i &\triangleq (1 , a_{1,i},...,a_{i-1,i}),
\end{align}
and $(b)$ is valid when $\zeta_i>\xi_i$, which is equivalent to a positive secrecy rate ($R_i^{sec}>0$). The Hessian matrix is negative semidefinite as $\alpha_i < 0$.
\color{black}

\section{Proof of Proposition \ref{theo:E0Opt}}
\label{Proposition Lambda}
We assume a fixed $\bm{\tau}$ and write the KKT conditions for the problem defined by equation \eqref{DualProb} as
\begin{align}
&\frac{\partial \mathcal{L}}{\partial E_{0i}} \ge 0 \Rightarrow \mathcal{B}_i(\frac{E_i}{\tau_i}) \le \nu, \ \forall i \in \{1,...,K\} \label{dualNotInf},\\
& \frac{\partial \mathcal{L}}{\partial E_{0i}} = 0 \Rightarrow \mathcal{B}_i(\frac{E_i}{\tau_i}) = \nu,\ \forall E_{0i} \ne 0 \label{E0NotZero}, \\
&\sum_{i=1}^K (E_{0i}+\tau_i)=1, \label{equ:Errrr}
\end{align} 
where $\mathcal{B}_i(\frac{E_i}{\tau_i}) \triangleq\mu_i\eta_iP_H \bigg(\frac{ \zeta_i-\xi_i}{(1+\zeta_i\frac{E_i}{\tau_i})(1+\xi_i\frac{E_i}{\tau_i})} \bigg)$.
We put equation \eqref{dualNotInf} for the purpose that the gradient of the Lagrangian with respect to $E_{0i}$ must be positive, otherwise the dual function will tend to infinity \cite{Boyd2004}. 
We derive $E_{0i}$'s using equation \eqref{E0NotZero} as
\begin{align}
\nu &= \mu_i\eta_iP_H \bigg(\frac{ \zeta_i-\xi_i}{(1+\zeta_i\frac{E_i}{\tau_i})(1+\xi_i\frac{E_i}{\tau_i})} \bigg) \nonumber \\ 
&\Rightarrow  {E_i} = \tau_i\frac{-(\zeta_i +\xi_i)+\sqrt{(\zeta_i -\xi_i)^2+
\frac{4\zeta_i \xi_i \mu_i \eta_iP_H(\zeta_i -\xi_i)}{\nu}}}{2\zeta_i \xi_i}\label{EiLanda} \\
&\Rightarrow E_{0i} = \max\{0,\frac{E_i}{\mu_i\eta_i P_H}-\sum_{j=1}^{i-1}\tau_j a_{j,i}\}, \ \forall i \in \{ 1,...,K \}. \label{E0Landa}
\end{align}

We define $Er(\nu) \triangleq\sum_{i=1}^K (E_{0i}+\tau_i) - 1$, based on equation \eqref{equ:Errrr}. We have $Er(0)>0$, and $Er(\nu_{max})<0,\ \nu_{max}=\max_{i} \mu_i\eta_iP_H(\zeta_i-\xi_i)$.
It is evident from equations \eqref{EiLanda}  and \eqref{E0Landa} that increasing $\nu$ will decrease $E_{i}$'s, and consequently $E_{0i}$'s. Therefore, $Er(\nu)$ is a decreasing continuous function of $\nu$. Hence, we can use a simple bisection method to find $0\le \nu^*\le \nu_{max}$ such that $Er(\nu^*)=0$. This point $(\bm{E_0}^*, \nu^*)$ satisfies all the KKT conditions and therefore is the optimal solution of the problem.     

\section{Proof of Proposition \ref{theo:E0OptMaxMin}}
\label{sec:PropositionLambdaMaxMin}
We write the KKT conditions of the problem defined by equation \eqref{equ:DualMaxMin} for a fixed $\tau$ as
\begin{align}
&\lambda_i(\phi-D_i^{sec}(\frac{E_i}{\tau_i})) = 0,\ \forall i \in \{ 1,...,K \}, \label{equ:MaxMinLambda}\\
&\frac{\partial \mathcal{L}}{\partial E_{0i}} = 0 \Rightarrow \mathcal{B}_i(\frac{E_i}{\tau_i})\lambda_i = \nu,\ \forall E_{0i}\ne 0, \label{equ:MaxMinNu}\\
&\sum_{i=1}^K (E_{0i}+\tau_i) = 1, \label{equ:errMinPhi}\\
&\frac{\partial \mathcal{L}}{\partial \phi} = 0 \Rightarrow \sum_{i=1}^K \lambda_i = 1.  \label{equ:SumLambda}
\end{align}
The slack variable $\phi$ expresses the minimum throughput. Algorithm \ref{alg:FixTauE0MaxMin} in steps \ref{step:BeginingBPhi} to \ref{step:EndBPhi} yields a set for each $\phi$, denoted by $B_\phi$, with the property that for $i \in B_\phi$, we have $(D_i^{sec}(\frac{E_i}{\tau_i}) = \phi, E_{0i} \ne 0)$ and for $i \notin B_\phi$, we have $(D_i^{sec}(\frac{E_i}{\tau_i}) > \phi, E_{0i} = 0)$. This part of the algorithm satisfies equations \eqref{equ:MaxMinLambda} and \eqref{equ:MaxMinNu}. As we increase $\phi$, we increase the minimum throughput of all the nodes, and $E_{0i}$'s increase to reach this minimum throughput. This will increase the error function defined in step \ref{step:E0Available} (based on equation \eqref{equ:errMinPhi}). It is straightforward to show that $Er(0)<0$, and we prove that $Er(\phi_{max})>0$ as follows:
\begin{align}
&\lim_{E_i \to \infty}D_i^{sec}(\frac{E_i}{\tau_i})=\tau_i(\log{\zeta_i}-\log{\xi_i}) ,\ \forall i \in \{ 1,...,K\}\\ 
& \Rightarrow Er(\phi_{max}) = \infty.
\end{align} 
Therefore, with the aid of a simple bisection method, we can find the optimal $0\le\phi^* \le \phi_{max}$ such that $Er(\phi^*) = 0$. This optimal point ($\phi^*$,$B_{\phi^*}$,$\mathbf{E}_{0}^*$) satisfies all the KKT conditions. For the optimal $B_{\phi^*}$, we derive $\lambda_i^*$'s and $\nu^*$ using equation \eqref{equ:MaxMinNu} as follows:
\begin{align}
\lambda_i^* &= \frac{\nu^*}{\mathcal{B}_i(\frac{E_i^*}{\tau_i})},\ \forall i \in B_{\phi^*}\ \Rightarrow \sum_{i\in B_{\phi^*}} \frac{\nu^*}{\mathcal{B}_i(\frac{E_i^*}{\tau_i})}\stackrel{(a)}{=}1 \\
&\Rightarrow \nu^* = \frac{1}{\sum_{i \in B}\frac{1}{\mathcal{B}(\frac{E_i^*}{\tau_i})}}, 
\end{align}
where $(a)$ comes from equation \eqref{equ:SumLambda}. We obtain all the optimal values to compute the gradient with respect to $\bm{\tau}$.

\balance
\section{Proof of Proposition \ref{prop:QPFTauFix}}
\label{sec:PropositionLambdaProp}
The KKT conditions for the optimization problem defined by equation \eqref{equ:DualProp} for a fixed $\tau$ can be expressed as
\begin{align}
\frac{\partial \mathcal{L}}{\partial \psi_{i}} = 0 \Rightarrow &\lambda_i = \frac{1}{\psi_i} \label{equ:ObtainLambda},\\
\lambda_i(\psi_i - D_i^{sec}(\frac{E_i}{\tau_i})) = 0 \Rightarrow &\psi_i = D_i^{sec}(\frac{E_i}{\tau_i}) \label{equ:ObtainPsi},\\
\frac{\partial \mathcal{L}}{\partial E_{0i}} = 0 \Rightarrow &\mathcal{B}_i(\frac{E_i}{\tau_i})\lambda_i = \nu,\ \forall E_{0i}\ne 0 \label{equ:PropNu},\\
&\sum_{i=1}^K (E_{0i}+\tau_i) = 1.
\end{align}

As it can be seen from equation \eqref{equ:ObtainPsi}, for any $\tau$ (including the optimal $\tau^*$) we have $\psi^*_i = D_i^{sec}(\frac{E^*_i}{\tau_i})$. It shows that the optimal solution of the optimization problems defined by equations \eqref{opt:LogFair} and \eqref{opt:SlackAddLog} (which add slack variables) are the same. For each $E_{0i}\ne 0$, we can simplify and rewrite the above equations as
\begin{align}
&\mathcal{B}_i (\frac{E_i}{\tau_i}) - \nu D_i^{sec}(\frac{E_i}{\tau_i}) = 0, \\
&\sum_{i=1}^K (E_{0i}+\tau_i) = 1. \label{equ:errProp}
\end{align}
 We solve the above equations for all the nodes in step \ref{Alg:Step:PropMainEq} and set the negative $E_{0i}$'s to zero in step \ref{step:sumchannel}. The error function defined in step \ref{Alg:Stp:Err} (based on equation \eqref{equ:errProp}) is an increasing function of $E_{0i}$'s. In addition, it is straightforward to show that $E_{0i}$'s are decreasing function of $\nu$. As a result, the error function is a decreasing function of $\nu$. Moreover, since $Er(0)>0$, it is easy to find $\nu_{min}$ and $\nu_{max}$ such that the optimal $\nu^*$ ($Er(\nu^*)=0$)  lies between them. Then, we can obtain $\nu^*$ using a simple bisection method. Finally, we obtain the optimal $\bm{\psi}^*$ and $\bm{\lambda}^*$ using equations \eqref{equ:ObtainLambda} and \eqref{equ:ObtainPsi}, and compute the gradient with respect to $\bm{\tau}$ using all these obtained optimal variables.
 
 \color{black}
 \section*{Acknowledgment}
The first author acknowledges the support of the Agency for Science, Technology and Research (A*STAR) Research Attachment Program (ARAP). The second author acknowledges the support of the A*STAR Industrial Internet of Things (IIoT) Research Program under the RIE2020 IAF-PP Grant A1788a0023.
\color{black}

\end{document}